\date{\today}
\newtheorem{theorem}{Theorem}%[section]
\newtheorem{proposition}[theorem]{Proposition}
\newtheorem{lemma}[theorem]{Lemma}
\newtheorem{corollary}[theorem]{Corollary}
\theoremstyle{definition}
\theoremstyle{definition}
\newtheorem*{remark}{Remark}
\newcommand{\C}{\mathbb{C}}
\newcommand{\Cl}{C_{\textnormal{loc}}}
\newcommand{\e}{\textnormal{e}}
\newcommand{\E}{\mathbb{E}}
\newcommand{\G}{\mathcal{G}}
\newcommand{\lk}{\left(}
\newcommand{\N}{\mathbb{N}}
\newcommand{\ol}{\bar {\lambda}}
\newcommand{\p}{\mathbb{P}}
\newcommand{\R}{\mathbb{R}}
\newcommand{\rk}{\right)}
\newcommand{\rso}{random Schr\"odinger operator}
\newcommand{\Z}{\mathbb{Z}}
\DeclareMathOperator{\dist}{dist}
\DeclareMathOperator{\supp}{supp}
\DeclareMathOperator{\argmax}{arg\,max}
\begin{document}

\title[Poisson Statistics on Regular Graphs]{Poisson Eigenvalue Statistics for  Random Schr\"odinger Operators on regular graphs}

\author[Leander Geisinger]{Leander Geisinger}
 \address{Leander Geisinger, Department of Physics, Princeton University, Princeton, NJ 08544, USA}
\email{leander.geisinger@gmail.com}

%\thanks{\copyright\, 2014 by the author. This paper may be reproduced, in its entirety, for non-commercial purposes.}

\begin{abstract}
For random operators it is conjectured that spectral properties of an infinite-volume operator are related to the distribution of spectral gaps of finite-volume approximations. In particular,  localization and pure point spectrum in infinite volume is expected to correspond to Poisson eigenvalue statistics.

Motivated by results about the Anderson model on the infinite tree we consider random Schr\"odinger operators on finite regular graphs. We study local spectral statistics: We analyze the number of eigenvalues in intervals with length comparable to the inverse of the number of vertices of the graph, in the limit where this number tends to infinity. We show that the random point process generated by the rescaled eigenvalues converges  in certain spectral regimes of localization to a Poisson process.

The corresponding result on the lattice was proved by Minami. However, due to the geometric structure of regular graphs the known methods turn out to be difficult to adapt. Therefore we develop a new approach based on direct comparison of eigenvectors.
\end{abstract}

\maketitle

%%%%%%%%%%%%%%%%%%%%%%%%%%%%%%%%%%%%%%%%%%%%%%%%%%%%%%%%%%%%%%%%%%%%%%%%%%%%%%%%%%%%%%%%%%%

\section{Introduction}

For random operators with extensive disorder it is generally conjectured that there is a connection between spectral properties of a random operator in an infinite-volume set-up and the distribution of spectral gaps of finite-volume approximations, see for example \cite{AltShk86,ShkShaSeaLamSho93,Efe97} and references therein. In particular, it is expected that within  regimes of localization and pure point spectrum of the infinite volume operator the local spectral statistics of finite volume approximations is close to Poisson statistics. 

An intuitive argument that goes back to Dyson \cite{Dys62} can be based on perturbation theory: the eigenvalues of a finite-volume operator repel each other and this level repulsion is comparable to the overlap of the corresponding eigenvectors. For eigenvalues in spectral regimes of localization the eigenvectors are typically localized in different regions and have negligible overlap. Hence, in these spectral regimes level repulsion is expected to have vanishing effect and eigenvalues should be fairly independent. This is the guiding idea, that eigenvalue statistics has to be studied by analyzing eigenvectors.

An important example for random operators is the Anderson model of random Schr\"odinger operators \cite{CarLac90,PasFig92,Sto01}. On the lattice $\Z^d$, $d \geq 1$, a random Schr\"odinger operator  is given by the lattice Laplacian plus a random potential. Anderson localization  on the lattice, in particular the existence of regimes of pure point spectrum was proved  first in \cite{FroSpe83} using a multi-scale analysis and later in \cite{AizMol93,Aiz94} using bounds on fractional moments of the Green function.

On the lattice the aforementioned relation of localization and spectral statistics was also established rigorously: By restricting the random Schr\"odinger operator on the infinite lattice to a cube $C_r$ of side length $r \in \N$ one obtains  a random symmetric $n_r \times n_r$ matrix, where $n_r$ denotes the number of lattice points in $C_r$. So there are $n_r$ eigenvalues $E_j$ and (assuming that the random potential has bounded support) these eigenvalues accumulate in a uniformly bounded interval. To study local spectral statistics one considers the random eigenvalue point process on the scale of the mean eigenvalue spacing, that is on the scale $1/n_r$. Thus one studies the point process generated by the shifted, rescaled points $n_r (E_j - E)$, $j = 1, \dots, n_r$, in the limit $r \to \infty$. If $E$ lies within a spectral regime of localization this random point process converges in distribution to a Poisson process. This was proved in one dimension \cite{Mol80}, where the entire spectrum is pure point, and in higher dimensions \cite{Min96}, where the regime of localization is characterized by exponential decay of fractional moments of the Green function. This result, in particular the relevant regime of localization, was extended in \cite{Wan01,GerKlo13,GerKlo14} and the implications on the distribution of eigenvectors were studied in \cite{Nak06,KilNak07,BelHisSto07}. These results are proved on the lattice $\Z^d$. Poisson eigenvalue statistics was also derived on the single-ended Canopy graph  \cite{AizWar06} but so far no rigorous results have been found for regular graphs.

Here we prove Poisson eigenvalue statistics in certain spectral regimes of localization for a large class of graphs including regular graphs. In a regular graph of degree $K+1$, $K \in \N$, each vertex is connected by an edge to $K+1$ other vertices. As explained below it is difficult to adapt the existing methods to show Poisson statistics to regular graphs so we have to develop a new approach.

The motivation to consider regular graphs stems from the fact that random regular graphs are  appropriate finite-volume approximations of infinite regular trees. A random regular graph of degree $K+1$ with $n$ vertices -- a graph chosen from the ensemble of all such graphs with uniform probability -- typically coincides locally with a regular infinite tree of degree $K+1$. (We refer to \cite{Bol01} for details about random graph models.) This property was used to prove that the spectral measure of the graph Laplacian on a random regular graph converges to the spectral measure on the infinite tree as $n$ tends to infinity  \cite{Kay81}. This convergence was generalized to show that the Laplacian and random Schr\"odinger operators on random regular graphs approximate the corresponding operators on the infinite tree in various ways \cite{BroLin13, DumPal12,TraVuWan13,AnaLem13, Gei13}.

In turn the Anderson model on the infinite tree is one of the most studies models of random operators starting with the seminal work of Anderson \cite{And58}. It is generally conjectured that the Anderson model shows a phase transition from localization to delocalization. However, the existence of delocalized eigenvectors and regimes of absolutely continuous spectrum has been established rigorously only on trees \cite{Kle98,AizSimWar06b,FroHasSpi07,AizWar13}. It is a frequently discussed question if this phase transition can also be seen in the local spectral statistics on regular graphs that approximate trees. It is conjectured that random Schr\"odinger operators on random regular graphs show a transition from localization, where the eigenvalue point process converges to a Poisson process, to delocalization, where eigenvalue statistics is governed by level repulsion familiar from random matrix theory \cite{JakMilRivRud99,AizWar06,DisRiv08,ErdKnoYauYin12,War13}. There are physical arguments and numerical results  \cite{Elo08,OreSmi10,BirReiTar12,DelSca13} but so far there have been no rigorous proofs for neither part of the conjecture.

To explain the difficulties that arise in trying to extend the known methods to prove Poisson statistics to regular graphs, fix a vertex $x$ in a regular graph of degree $K+1$. For $r \in \N$ consider the neighborhood $B_r(x)$ of all vertices that are at distance at most $r$ from $x$. Together with this neighborhood consider also its inner boundary $\partial B_r(x)$ of vertices that are at distance $r$ from $x$. Then both, the number of points in $B_r(x)$ and the number of points in the boundary $\partial B_r(x)$ are of order $K^r$. In particular, the ratio of boundary to volume does not go to zero as $r$ grows. On the lattice $\Z^d$ the same ratio decays with rate $1/r$ and this decay is a crucial ingredient in the existing proofs of Poisson eigenvalue statistics.

For example in \cite{Min96}, eigenvalue statistics is studied by analyzing the Green function in the cube $C_r$. This cube is decomposed into smaller cubes and the Green function is decoupled at the boundaries of the smaller cubes. This decoupling leads to independence and eventually to Poisson statistics. However, this strategy relies on the fact that the error due to decoupling along the boundaries has negligible effect. Hence, this strategy can not be  adapted to regular graphs because of the non-vanishing contribution of boundary terms.  

In this article we study regular graphs  and more general graphs with $n \in \N$ vertices and uniformly bounded degree in the limit $n \to \infty$. This includes all regular graphs with fixed degree and in particular random regular graphs. We show that the rescaled eigenvalue process converges -- in a certain regime of localization that is specified below --   in distribution to a Poisson process. To circumvent the difficulties mentioned above we use a new approach. We do not work primarily with the Green function but we analyze eigenvectors directly. In particular, we do not use an a priori decomposition but we adapt the decomposition to the location of the eigenvectors. This is realized by comparing eigenvectors of  random Schr\"odinger operators on the graph with eigenvectors of local restrictions of the operators.

In the next section we first introduce the relevant notation about graphs, \rso s, and eigenvalue processes. In \eqref{eq:fracmom} we state and discuss the localization assumption given in terms of exponential decay of fractional moments of the Green function. Then in Theorem~\ref{thm:graph} and Corollary~\ref{cor:poisson} we formulate the  main results. In Section~\ref{sec:strategy} we give the strategy of proof and explain the structure of the remainder of the article.

%%%%%%%%%%%%%%%%%%%%%%%%%%%%%%%%%%%%5

\section{Main result}

We consider simple undirected connected graphs $\mathcal G_n$ with $n \in \N$ vertices.  For two vertices $x, y \in \G_n$ we write $d(x,y)$ for the distance between $x$ and $y$, that is for the length of the shortest path in $\G_n$ connecting $x$ and $y$. For a vertex $x \in \G_n$ and  $r > 0$ let 
\[
B_r(x) = \{ y \in \G_n \, : \, d(x,y) \leq r \}
\]
denote the $r$-neighborhood of $x$. For any subset $\mathcal B \subset \G_n$ we write  $\partial \mathcal B = \{ x \in \mathcal B \, : \, \dist(x,\G_n \setminus \mathcal B ) = 1 \}$ for its inner boundary and $|\mathcal B|$ for the number of vertices in $\mathcal B$. More generally, for a set $A$ we write $|A|$ for the number of elements in $A$, while for an interval $I \subset \R$ we write $|I|$ for its length.

The (maximal) degree of a graph is the maximal number of edges emerging from a vertex. We assume that the degree of the graphs $\mathcal G_n$ is uniformly bounded by $K+1$ with $K \geq 2$. Then for any $n \in \N$, $x \in \G_n$, and $r > 0$ we get
\begin{equation}
\label{volbound}
|B_r(x)| \leq 1+ (K+1) \sum_{m=0}^{r-1} K^m \leq 3 K^r
\end{equation}
 and
 \begin{equation}
 \label{surfbound}
 |\partial B_r(x)| \leq (K+1) K^{r-1} \leq \frac 32 K^r \, .
\end{equation}
This is satisfied, in particular, for $(K+1)$-regular graphs where each vertex is connected by an edge to $K+1$ other vertices. However, our results are valid for all simple undirected connected graphs that satisfy the uniform bounds  \eqref{volbound} and \eqref{surfbound}.

We study  the distribution of eigenvalues of random Schr\"odinger operators
\begin{equation}
\label{eq:op}
H_n(\omega) = A_n + \alpha \, V_n(\omega) \, ,
\end{equation}
with domain $\ell^2(\G_n)$ in the limit $n \to \infty$. Here $A_n$ denotes the adjacency matrix of $\G_n$,
\[
\lk A_n \phi \rk(x) = - \sum_{y \in \G_n \, : \, d(y,x) = 1} \phi(y)  \, , \qquad x \in \G_n \, , \qquad \phi \in \ell^2(\G_n) \, ,
\]
that corresponds to the graph Laplacian with the diagonal terms removed. The random potential  $V_n(\omega)$  acts as a multiplication operator,
\[
(V_n(\omega) \phi)(x) = \omega_x \phi(x) \, ,  \qquad x \in \G_n \, , \qquad \phi \in \ell^2(\G_n) \, ,
\]
where $(\omega_x)_{x \in \G_n}$ is a collection of independent identically distributed real random variables. We assume that the single-site distribution
$\rho(dt) = \textnormal{Prob}(\omega_x \in dt)$, $x \in \G_n$, is absolutely continuous with bounded density such that 
\[
\| \rho \|_\infty = \sup_{t \in \R} |\rho(t)| < \infty \, .
\]
We also assume that the support of $\rho$ is bounded such that $\supp \rho = [-\rho_0 ,\rho _0]$ with $\rho_0 < \infty$. With $\p$ and $\E$ we denote probability  and expectation with respect to the distribution of $\omega = (\omega_x)_{x \in \G_n}$.
Finally $\alpha > 0$ is a parameter controlling the strength of the disorder.

We denote by $(E_j(\omega))_{j=1}^n = \sigma(H_n(\omega))$ the eigenvalues of the operator $H_n(\omega)$ and by $(\phi_j(\omega))_{j=1}^n$ the corresponding orthonormal eigenvectors. For eigenvalues with multiplicities we count them according to the multiplicity and we choose the eigenvectors as an orthonormal basis of the corresponding eigenspace.  (We can assume, however, that almost surely all eigenvalues are simple.) For an interval $J \subset \R$ let $N(J)$ denote the number of eigenvalues in $J$.

The spectrum of $H_n(\omega)$ is almost surely contained in the interval $[ -K-1 - \alpha \rho_0 , K+1+ \alpha \rho_0 ].$ Thus the eigenvalues accumulate in a bounded interval. To study local spectral statistics we are interested in the random eigenvalue point process on the scale of the mean eigenvalue spacing, that is on the scale $1/n$. In particular, we consider random variables $N(I_n)$, where $I_n$ are suitable intervals with length of order $1/n$.
On the infinite $(K+1)$-regular tree the spectrum of the corresponding random Schr\"odinger operator is almost surely contained in the interval $[ -2\sqrt{K} - \alpha \rho_0 , 2\sqrt{K} + \alpha \rho_0 ]$ (see for example \cite{KunSou80,KirMar82,FigNeb91, PasFig92}). On random regular graphs it follows that the support of the spectral measure of $H_n(\omega)$ converges to this interval as $n \to \infty$ \cite{Kay81,Gei13}.

Thus for fixed $E \in   [ -2\sqrt{K} - \alpha \rho_0 , 2\sqrt{K} + \alpha \rho_0 ]$ we consider the random point process $\nu_n^{(E)}$ generated by the rescaled eigenvalues $\left\{ n(E_j - E) \right\}_{j=1}^n$:
\[
\nu_ n^{(E)} = \sum_{j = 1}^n \delta_{n(E_j - E)} \, .
\]
Here $\delta_{n(E_j-E)}$ denotes the Dirac measure: For a Borel set $A \subset \R$,  $\delta_{n(E_j-E)}(A)=1$ if $n(E_j-E) \in A$ and $\delta_{n(E_j-E)}(A)=0$ otherwise.
For any bounded interval $I \subset \R$ we denote by 
\[
I_n = E + I/n = \left\{ t \in \R \, : \, n(t-E) \in I \right\}
\]
the rescaled interval centered at $E$ such that 
\begin{equation}
\label{numberprocess}
\nu_n^{(E)}(I) = N(I_n) \, .
\end{equation}
Our goal is to show that $\nu_n^{(E)}$ converges -- for values of $E$ in suitable spectral regimes of localization -- to a Poisson process.

We prove this convergence to a Poisson process in spectral regimes of localization, where fractional moments of the Green function decay exponentially. A localization assumption in terms of decay of the Green function is also used for example in \cite{Min96,Wan01,AizWar06}. 
To state the precise assumption we denote by 
\[
G_{n,\alpha}(x,y;z) = \left< \delta_x, (H_n(\omega)-z)^{-1} \delta_y \right> \, , \quad x,y \in \G_n \, , \quad z \in \C_+ \, ,
\]
the matrix elements of the Green function of $H_n(\omega)$. Here $\delta_x \in \ell^2(\G_n)$ is such that $\delta_x(u)= 1$ for $x = u$ and $\delta_x(u) = 0$ otherwise.

We assume that there is an open interval $I_0 \subset [ -2 \sqrt{K} - \alpha \rho_0 , 2 \sqrt{K} + \alpha \rho_0 ]$ and that there are constants $n_0 \in \N$ and $s \in (0,1)$ such that for all $n\geq n_0$ and all $z = E + i\zeta$ with $E \in I_0$ and $\zeta > 0$ the estimate
\begin{equation}
\label{eq:fracmom}
\E \left[ \left| G_{n,\alpha}(x,y; z ) \right|^s \right] \leq C_s \exp(-\mu_s \, d(x,y))
\end{equation}
holds for all $x, y \in \G_n$ with $\mu_s$ large enough and with a uniform constant $C_s$. We emphasize that the right-hand side is independent of $n$, $E \in I_0$, and $\zeta > 0$.

Exponential decay of this type can be proved with  established methods developed to derive localization via fractional moments of the Green function  \cite{AizMol93,Aiz94,AizGra98,AizSchFriHun01,AizElgNabSchSto06} (see also \cite[Ch. 7]{AizWar14} for a condensed presentation). In particular, bounds of this form hold for large disorder (that means large $\alpha$) or extreme energies (that means $E$ close to the spectral edges). On the infinite tree even the Green function of the adjacency matrix (corresponding to $\alpha = 0$) decays exponentially with  rate $(\ln K)/2$. Localization and pure point spectrum for \rso s is proved in regimes where \eqref{eq:fracmom} holds with $\mu_s >   \ln K$ so that the bound is summable over the tree \cite{Aiz94}.

However, on trees as well as on finite graphs with bounded degree there are  regimes such that \eqref{eq:fracmom} holds with arbitrarily large decay rate. With the methods mentioned above one can derive, for example, that for given $\mu_0$, there is $\alpha_0$ such that for $\alpha \geq \alpha_0$ the bound \eqref{eq:fracmom} holds with $\mu_s = \mu_0$ for all $E \in [ -2 \sqrt{K} - \alpha \rho_0 , 2 \sqrt{K} + \alpha \rho_0 ]$ and $\zeta > 0$. In this case we could choose $I_0 = ( -2 \sqrt{K} - \alpha \rho_0 , 2 \sqrt{K} + \alpha \rho_0 )$.

\begin{theorem}
\label{thm:graph}
Let $(\G_n)_{n \in \N}$ be a sequence of graphs such that $\G_n$ has $n$ vertices and such that the bounds \eqref{volbound} and \eqref{surfbound} are satisfied for all $n \in \N$.
Assume that $I_0 \subset \R$ is an open interval such that the fractional moment bound \eqref{eq:fracmom} holds with $\mu_s \geq 43 \ln K$. 

Then for all $E \in I_0$ and all bounded intervals $I \subset \R$ the limit 
\[
\lim_{n \to \infty }\left| \E \left[ \exp \lk - t \nu_n^{(E)}(I) \rk \right] - \exp \lk - \E \left[ \nu_n^{(E)}(I) \right]  \lk 1 - e^{-t} \rk \rk \right| = 0
\]
holds uniformly in $t \geq 0$.
\end{theorem}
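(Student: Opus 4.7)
The plan is to approximate $N(I_n) = \nu_n^{(E)}(I)$ by a sum $\sum_{x \in \G_n} X_x$ of Bernoulli variables with small pairwise overlap and then invoke a Stein--Chen or Le~Cam type Poisson approximation.

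Fix an intermediate length scale $r = r(n) \sim c \log_K n$, chosen small enough that $K^{cr}$ is a tiny power of $n$ yet large enough that $\e^{-\mu_s r}$ beats all combinatorial factors that enter the error estimates. For each $x \in \G_n$ let $H^x$ denote the restriction of $H_n(\omega)$ to $\ell^2(B_r(x))$ with Dirichlet boundary condition on $\partial B_r(x)$. Since $H^x$ depends only on $\{\omega_y\}_{y \in B_r(x)}$, the operators $H^x$ and $H^{x'}$ are independent whenever $B_r(x) \cap B_r(x') = \emptyset$; this is what will supply the asymptotic independence.

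The central step is a comparison lemma between eigensystems of $H_n(\omega)$ and of the family $\{H^x\}_{x \in \G_n}$. For an eigenpair $(E_j, \phi_j)$ of $H_n(\omega)$ with $E_j \in I_n$, write $x_j \in \argmax_{x \in \G_n} |\phi_j(x)|$ for its localization center. The bound \eqref{eq:fracmom}, transferred to eigenfunctions via spectral averaging over $E_j$, yields an exponential decay of the form $\E[|\phi_j(y)|^s \mathbf 1\{E_j \in I_n\}] \leq C \e^{-\mu_s d(x_j,y)}$. Truncating $\phi_j$ to $B_r(x_j)$ and renormalizing then produces an approximate eigenvector of $H^{x_j}$ with eigenvalue inside a small neighborhood of $E_j$, via a Temple-type or resolvent perturbation argument. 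Conversely, an eigenvector of $H^x$ concentrated away from $\partial B_r(x)$ extends by zero to an approximate eigenvector of $H_n(\omega)$. Setting
\[
X_x = \mathbf 1\{H^x \text{ has an eigenvalue in } I_n \text{ with eigenvector peaked at } x\},
\]
this gives $N(I_n) = \sum_{x \in \G_n} X_x$ off an event of probability $o(1)$.

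With this Bernoulli representation the remainder follows a by-now-standard template. A Minami-type estimate for each $H^x$, based on a rank-two Wegner bound on the ball $B_r(x)$, gives $\p(H^x \text{ has two eigenvalues in } I_n) \leq C (|B_r(x)| |I|/n)^2$, and an analogous joint bound controls $\p(X_x X_{x'} = 1)$ when $d(x,x') \leq 2r$. Summing these and invoking Le~Cam's inequality, together with independence for disjoint balls, yields
\[
\left| \E\!\left[\e^{-t N(I_n)}\right] - \exp\!\left(-\E[N(I_n)](1-\e^{-t})\right) \right| \to 0
\]
uniformly in $t \geq 0$, which is the claim. The main obstacle is the comparison lemma: on $\Z^d$ one decouples along cube boundaries because the surface-to-volume ratio vanishes, whereas here the ratio stays of order one and a naive truncation of $\phi_j$ sheds order-one mass through $\partial B_r(x_j)$. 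The only counter-weight is the rate $\mu_s$: each resolvent-style swap between $H_n$ and $H^x$ incurs a combinatorial factor of the form $K^{cr}$ and pays for it with $\e^{-\mu_s r}$. Tracking these swaps through the entire chain of estimates (truncation, approximate eigenvalues, joint Minami on overlapping balls) and demanding strict net decay at every step is what fixes the numerical threshold $\mu_s \geq 43\ln K$.
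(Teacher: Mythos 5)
Your strategy is the same one the paper uses: restrict $H_n$ to balls $B_r(x)$, declare a Bernoulli indicator per vertex from the spectrum of the restriction, exploit independence for disjoint balls plus a local Minami bound, and close with a Chen--Stein (or Le~Cam) Poisson approximation, paying combinatorial factors $K^{cr}$ out of the exponential rate $\mu_s$. The paper also uses ball-restrictions (Neumann rather than Dirichlet, but that is cosmetic), an eigenfunction-correlator version of \eqref{eq:fracmom} to get exponential localization (Proposition~\ref{pro:locas}), a Chen--Stein bound (Lemma~\ref{lem:poisson}), and a local two-level Minami estimate (Lemma~\ref{lem:locminami}). So you have correctly identified the architecture.

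The gap is in the definition of $X_x$ and the claim that $N(I_n)=\sum_x X_x$ up to a negligible event. You set $X_x = \mathbf{1}\{H^x$ has an eigenvalue in $I_n$ with eigenvector peaked at $x\}$, where ``peaked at $x$'' presumably means $x=\argmax_y|\psi^{(x)}(y)|$. This does not give a one-to-one correspondence with eigenvalues of $H_n$ in $I_n$. The eigenvector $\psi^{(x)}$ of $H^x$ close to a true eigenvector $\phi_j$ is only close to $\phi_j$ in $\ell^2$, so its pointwise maximum may sit at some $p$ near but unequal to the localization center $x_j$; in that case $X_{x_j}=0$, and $X_p=0$ too unless $\psi^{(p)}$ happens to peak at $p$. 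Nothing in the construction forces some $x$ in the cluster to satisfy the ``peaked at itself'' condition, so you can undercount. Conversely, ``peaked at $x$'' does not prevent $\psi^{(x)}$ from carrying substantial mass on $\partial B_r(x)$, so $H^x$ may have a spurious boundary eigenvalue in $I_n$ whose eigenvector peaks at $x$ by accident, causing overcounting. The paper resolves both issues with a two-stage construction that you omit: membership in $\mathcal F_n$ is defined by a quantitative boundary-decay requirement $\sum_{y\in\partial B_n(x)}|\psi^{(x)}(y)|^2\le\tau_n^2$ (not by the location of the peak), which kills spurious boundary modes; then $\mathcal F_n$ is thinned to $\mathcal E_n$ by selecting from each cluster $\mathcal C(x)$ the vertex maximizing $|\psi^{(x)}(x)|$, a well-ordered choice across nearby $x$'s that almost surely picks exactly one representative per eigenvalue (Lemmas~\ref{lem:upper} and \ref{lem:goodset}). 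This thinning also forces the independence radius to be $6R_n$ rather than $2r$, which feeds into the final optimization. Without a de-duplication mechanism of this robustness, the identity $\nu_n^{(E)}(I)=\sum_x X_x$ (even off a small event) is not established, and the rest of the argument has nothing to attach to.
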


Let $\mathcal P_\lambda$ denote a random variable, Poisson distributed with parameter $\lambda > 0$. Then
\[
\E \left[ \exp\lk -t \mathcal P_\lambda \rk \right] = \exp \lk - \lambda \lk 1- e^{-t} \rk \rk \, .
\]
Hence,  Theorem~\ref{thm:graph} shows that the Laplace transform of the random variable $\nu_n^{(E)}(I)$ and thus the  distribution of $\nu_n^{(E)}(I)$ is close to Poisson distribution with parameter $\E [ \nu_n^{(E)}(I) ]$.
From general results about convergence of point processes (see for example \cite[Ch. 14]{Kal97}) we deduce that $\nu_n^{(E)}$ converges to a Poisson point process:

Let us wirte $\bar \nu_n^{(E)}$ for the measure $\E [\nu_n^{(E)} ]$. By \eqref{numberprocess}, the Wegner estimate \eqref{wegner} implies
\[
\bar \nu^{(E)}_n(I) \leq \| \rho \|_\infty n |I_n| = \| \rho \|_\infty |I|
\]
for any interval $I \subset \R$. Thus $\bar \nu_n$ is absolutely continuous with respect to Lebesgue measure with uniformly bounded density. In particular,
\[
\sup_{n \in \N} \int_\R f(\tau) \bar \nu_n^{(E)}(d\tau) \leq \| \rho \|_\infty \int_\R f(\tau) d\tau < \infty
\]
for every continuous function $f \, : \, \R \mapsto [0, \infty)$ with compact support. From properties of the vague topology it follows that  there is a subsequence $\lk n_k \rk_{k \in \N}$ and a Borel measure $\bar \nu^{(E)}$  such that $\bar \nu_{n_k}^{(E)}$ converges vaguely to $\bar \nu^{(E)}$  \cite[Thm. A2.3]{Kal97}. In particular, $\bar \nu^{(E)}$ is again absolutely continuous with respect to Lebesgue measure and
\begin{equation}
\label{intmes}
\bar \nu_{n_k}^{(E)}(B)  \to \bar \nu^{(E)}(B) 
\end{equation}
holds for any bounded Borel set $B \subset \R$  \cite[Thm. A2.3]{Kal97}. We combine this with Theorem~\ref{thm:graph} and obtain
\[
\lim_{k \to \infty}  \E \left[ \exp \lk - t \nu_{n_k}^{(E)}(I) \rk \right] = \exp \lk -  \bar \nu^{(E)}(I)   \lk 1 - e^{-t} \rk \rk 
\]
for any bounded interval $I \subset \R$. In particular, $\nu_{n_k}^{(E)}(I)$ converges in distribution to $\mathcal P_{\bar \nu^{(E)}(I)}$. From \cite[Thm. 14.16]{Kal97} we conclude that the process $\nu_{n_k}^{(E)}$ converges in distribution to a Poisson process with intensity measure $\bar \nu^{(E)}$. We summarize these findings in the following statement.

\begin{corollary}
\label{cor:poisson}
Under the conditions of Theorem~\ref{thm:graph}, there is a  subsequence $(\G_{n_k})_{k \in \N}$ and a Borel measure $\bar \nu^{(E)}$ such that the eigenvalue process $\nu_{n_k}^{(E)}$ converges in distribution to a Poisson process with intensity $\bar \nu^{(E)}$. 
\end{corollary}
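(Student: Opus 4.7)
The corollary is a packaging of Theorem~\ref{thm:graph} in the language of point processes, and the plan is to derive it by combining Theorem~\ref{thm:graph} with standard compactness and uniqueness results for random measures on $\R$. The theorem shows that the Laplace transform of $\nu_n^{(E)}(I)$ is asymptotically Poisson with parameter $\bar \nu_n^{(E)}(I) := \E[\nu_n^{(E)}(I)]$, so two tasks remain: identify a limiting intensity $\bar \nu^{(E)}$ along a subsequence, and promote pointwise convergence of the counting variables to convergence of the full random point measure.

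For the first task I would control the intensity measures uniformly. The Wegner estimate gives $\E[\nu_n^{(E)}(I)] \leq \|\rho\|_\infty |I|$ for every bounded interval $I$, so the family $\{\bar \nu_n^{(E)}\}_{n \in \N}$ has Lebesgue density bounded by $\|\rho\|_\infty$ and is relatively compact in the vague topology on Radon measures. By Theorem~A2.3 of \cite{Kal97} a subsequence $(n_k)$ converges vaguely to some absolutely continuous Radon measure $\bar \nu^{(E)}$ (still with density bounded by $\|\rho\|_\infty$), in the sense that $\bar \nu_{n_k}^{(E)}(B) \to \bar \nu^{(E)}(B)$ for every bounded Borel set $B$ with $\bar \nu^{(E)}(\partial B) = 0$; this covers every bounded interval.

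Combining this with Theorem~\ref{thm:graph} then yields
\[
\lim_{k \to \infty} \E \left[ \exp \lk - t \, \nu_{n_k}^{(E)}(I) \rk \right] = \exp \lk - \bar \nu^{(E)}(I) (1 - e^{-t}) \rk
\]
for each bounded interval $I$ and each $t \geq 0$, which is exactly the Laplace transform of $\mathcal{P}_{\bar \nu^{(E)}(I)}$. To upgrade these one-dimensional distributional limits to convergence of the random measures $\nu_{n_k}^{(E)}$ as point processes, I would apply Theorem~14.16 of \cite{Kal97}: on a suitable generating class of bounded Borel sets it suffices to verify $\E[\nu_{n_k}^{(E)}(B)] \to \bar \nu^{(E)}(B)$ and $\p(\nu_{n_k}^{(E)}(B) = 0) \to \exp(-\bar \nu^{(E)}(B))$. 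The former is exactly the vague convergence established above, and the latter follows by sending $t \to \infty$ in the Laplace transform identity.

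No serious obstacle arises once Theorem~\ref{thm:graph} is in hand; the only technical point worth flagging is that $\bar \nu^{(E)}$ is diffuse, which is what guarantees the limit process is simple (and hence genuinely Poisson rather than just having Poisson marginals). This, however, is immediate from the uniform density bound supplied by the Wegner estimate, so the passage from Theorem~\ref{thm:graph} to Corollary~\ref{cor:poisson} is essentially bookkeeping.
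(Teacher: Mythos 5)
Your argument matches the paper's proof of Corollary~\ref{cor:poisson} essentially step for step: a Wegner-based uniform density bound, vague compactness of the intensity measures via \cite[Thm.~A2.3]{Kal97} to extract the subsequence and limiting measure $\bar\nu^{(E)}$, insertion into Theorem~\ref{thm:graph} to identify the Laplace transform of the counting variables, and then \cite[Thm.~14.16]{Kal97} to promote one-dimensional Poisson limits to convergence of the full point process. The additional remarks you make (void probabilities via $t\to\infty$, diffuseness of $\bar\nu^{(E)}$ ensuring simplicity) are sound elaborations of what \cite[Thm.~14.16]{Kal97} requires, not a departure from the paper's route.
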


Let us conclude this section with two remarks about the results.

First we note that the assumption about the rate of decay, $\mu_s \geq 43 \ln K$, is stronger than necessary. It is conjectured that the eigenvalue process converges to a Poisson process at least in regimes where the limiting infinite volume operator has pure point spectrum. On the tree pure point spectrum exists in regimes where \eqref{eq:fracmom} holds with $\mu_s >  \ln K$ \cite{Aiz94}. So our results do not cover the optimal range of values. But they establish existence of spectral regimes where the local distribution of eigenvalues is given by Poisson statistics. So in these regimes they prove the conjectured relation of localization and eigenvalue statistics.

If one is only interested in a statement of the form of Theorem~\ref{thm:graph} without information about the intensity measure, the assumption about the decay could be relaxed to $\mu_s > 18 \ln K$ and maybe even further. However, it seems  to be impossible to reach the optimal condition $\mu_s >  \ln K$ with the methods discussed here. 

The second remark concerns the intensity measure of the limiting Poisson process. On the lattice $\Z^d$ one can choose the subsequence $(n_k)_{k \in \N}$ with $n_k = k^d$ and consider cubes of side length $k$. Then \eqref{intmes} holds for Lebesgue almost every $E$ with 
\[
\bar \nu^{(E)}(B) = D(E) \, |B| \, ,
\]
see for example \cite{Min96}. Here $D(E)$ denotes the density of states at $E$ of the random Schr\"odinger operator on the lattice. Thus the intensity measure is given by Lebesgue measure times the density of states. A similar result holds for the Canopy graph \cite{AizWar06}.  In general, the intensity measure depends on the choice of graphs. But if we consider random regular graphs then it reasonable to conjecture that the limiting intensity measure in Corollary~\ref{cor:poisson} is given by Lebesgue measure times the density of states of the random Schr\"odinger operator on the infinite tree.

%%%%%%%%%%%%%%%%%%%%%%%%%%%%%

\section{Strategy of proof}
\label{sec:strategy}

Our strategy is based on the fact that the bound \eqref{eq:fracmom} on fractional moments of the Green function implies exponential localization of eigenvectors. Indeed, similar as in \cite{Aiz94,AizSchFriHun01} we get the following result about exponential localization that we prove in the appendix.

\begin{proposition}
\label{pro:locas}
For a graph $\G_n$ satisfying \eqref{surfbound} let $I \subset \R$ be such that the fractional moment bound \eqref{eq:fracmom} holds for $E \in I$ with exponent $\mu_s > 2 \ln K$ and let
\[
\ln K <   \mu < \mu_s- \ln K \, .
\]
Then there is a constant $\Cl > 0$ and a  random variable $X_n(\omega)$ with
\begin{equation}
\label{eq:exploc} 
\E \left[ X_n^{2(a-1)/a} \right] \leq  \Cl \, n \, |I| \, , \qquad 1 < a < 2-\frac{\ln K}{\mu} \, ,
\end{equation}
such that the following holds:
For each eigenvalue $E_j \in I$ there is a vertex $x_j \in \G_n$ such that the corresponding $\ell^2(\G_n)$-normalized eigenvector $\phi_j$ satisfies
\begin{equation}
\label{eq:loccond}
|\phi_j(x)| \leq X_n \exp \lk - \mu \, d(x,x_j) \rk 
\end{equation} 
for all $x \in \G_n$.
\end{proposition}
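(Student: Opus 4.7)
The plan is to follow the standard fractional-moment approach to exponential eigenfunction localization developed in \cite{Aiz94, AizSchFriHun01}, adapted here to the graph setting. The central object is the eigenfunction correlator
\[
Q_I(x,y) = \sum_{j:\,E_j \in I} |\phi_j(x) \phi_j(y)| \, , \qquad x, y \in \G_n.
\]
The first step is to show that \eqref{eq:fracmom} and the Wegner-type estimate (available since $\|\rho\|_\infty < \infty$) imply
\[
\E \left[ Q_I(x,y) \right] \leq C |I| \e^{-\mu_s d(x,y)}
\]
uniformly in $n$ and in $x, y \in \G_n$. The derivation uses the spectral identity $|\phi_j(x) \phi_j(y)| = \lim_{\zeta \downarrow 0} \zeta |G_{n,\alpha}(x, y; E_j + i \zeta)|$ (valid a.s., as the eigenvalues are simple) combined with a standard $L^1$-$L^s$ interpolation for the Green function; it uses only the abstract spectral structure of $H_n$ and carries over from the lattice with no essential change.

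For each eigenvalue $E_j \in I$ I would choose $x_j$ to be a vertex where $|\phi_j|$ attains its maximum, so that $|\phi_j(x_j)| \geq 1/\sqrt n$ by the $\ell^2$-normalization of $\phi_j$. The trivial inequality $|\phi_j(x)||\phi_j(x_j)| \leq Q_I(x, x_j)$ then yields the pointwise bound
\[
|\phi_j(x)| \leq \sqrt n \, Q_I(x, x_j), \qquad x \in \G_n,
\]
so that the candidate
\[
X_n = \sqrt n \, \sup_{x, y \in \G_n} Q_I(x, y) \, \e^{\mu d(x,y)}
\]
automatically satisfies $|\phi_j(x)| \leq X_n \e^{-\mu d(x, x_j)}$ for all $x$.

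To estimate $\E \lk X_n^p \rk$ with $p = 2(a-1)/a \in (0, 1)$, I would replace the supremum by a sum over pairs via subadditivity $(\sum_i t_i)^p \leq \sum_i t_i^p$, apply Jensen's inequality $\E[Q_I(x,y)^p] \leq (\E[Q_I(x,y)])^p$, and invoke the bound from the first step. Using the surface estimate \eqref{surfbound}, the result reduces to a geometric series $\sum_{r \geq 0} K^r \e^{p (\mu - \mu_s) r}$, which converges whenever $p(\mu_s - \mu) > \ln K$. The assumption $\mu < \mu_s - \ln K$ together with the condition $a < 2 - \ln K / \mu$ are exactly what is needed to place $p$ in this convergent range.

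The main obstacle is the precise bookkeeping of the $n$-dependence. The naive construction above produces an extra factor $n^{p/2}$ beyond the target prefactor $n|I|$, so the bound must be absorbed via a sharper definition of $X_n$. The specific shape of the target exponent $2(a-1)/a = 2 - 2/a$ is the Lyapunov exponent for the interpolation $\|\phi_j\|_a \leq \|\phi_j\|_1^{(2-a)/a} \|\phi_j\|_2^{2(a-1)/a}$ between the $\ell^1$ and $\ell^2$ norms of $\phi_j$; this strongly suggests replacing the crude use of $|\phi_j(x_j)| \geq n^{-1/2}$ by a H\"older-type interpolation between this lower bound and the normalization $\|\phi_j\|_2 = 1$. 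Carrying out this refinement while keeping the geometric series summable, uniformly in $a \in (1, 2 - \ln K/\mu)$, is the delicate point of the proof.
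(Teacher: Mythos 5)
Your early steps match the paper exactly: you introduce the same eigenfunction correlator $Q_I(x,y)$, derive the same bound $\E[Q_I(x,y)]\leq C|I|e^{-\mu_s d(x,y)}$ from \eqref{eq:fracmom}, and choose $x_j$ as the peak of $|\phi_j|$. The trouble is that your candidate $X_n = \sqrt n\,\sup_{x,y}Q_I(x,y)e^{\mu d(x,y)}$ does not satisfy \eqref{eq:exploc}: pushing through subadditivity and Jensen with $p=2(a-1)/a<1$ gives $\E[X_n^p]\lesssim n^{p/2}\,(C|I|)^p\cdot n$, and since the application takes $|I|\sim 1/n$ the extra factor $n^{p/2}|I|^{p-1}\sim n^{p/2}n^{1-p}$ diverges. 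You correctly diagnose this (the $1/\sqrt n$ lower bound on $|\phi_j(x_j)|$ is too crude and the target exponent signals a H\"older interpolation), but you do not carry out the refinement, so the proposal as written has a genuine gap at its central step.

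The paper closes the gap as follows. Introduce $A(x,\omega)=\sum_{v}e^{\mu d(x,v)}Q(x,v;I)$, so that $Q(x,y;I)\leq A(x,\omega)e^{-\mu d(x,y)}$ and, by linearity and \eqref{surfbound}, $\E[A(x)]\lesssim |I|$ with a constant independent of $n$. Rather than $|\phi_j(x_j)|\geq n^{-1/2}$, split the normalization as $1=\sum_x|\phi_j(x)|^a|\phi_j(x)|^{2-a}\leq |\phi_j(x_j)|^a\sum_x|\phi_j(x)|^{2-a}$, insert the a priori decay $|\phi_j(x)|\leq (A(x_j)/|\phi_j(x_j)|)e^{-\mu d(x,x_j)}$ into the sum, sum the geometric series using \eqref{surfbound} (this is where $a<2-\ln K/\mu$ enters), and solve for $|\phi_j(x_j)|$ to get $|\phi_j(x_j)|\gtrsim A(x_j)^{-(2-a)/(2a-2)}$. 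Feeding this back gives $|\phi_j(x)|\lesssim A(x_j)^{a/(2a-2)}e^{-\mu d(x,x_j)}$, which motivates the definition $X_n=\mathrm{const}\cdot\bigl(\sum_x A(x,\omega)\bigr)^{a/(2a-2)}$. The power is chosen precisely so that $X_n^{2(a-1)/a}$ is a constant times $\sum_x A(x,\omega)$, and then $\E[X_n^{2(a-1)/a}]=\mathrm{const}\cdot\sum_x\E[A(x)]\lesssim n|I|$ with no Jensen, no subadditivity loss, and no extraneous power of $n$. In short, your skeleton and your diagnosis are right, but the essential bookkeeping that turns them into \eqref{eq:exploc} -- the self-consistent lower bound on $|\phi_j(x_j)|$ in terms of $A(x_j)$ and the matched choice of exponent in the definition of $X_n$ -- is the content of the proof and is missing from your proposal.
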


Note that the bound in \eqref{eq:exploc} includes a factor $n$. To get a bound independent of $n$ we will apply Proposition~\ref{pro:locas} to intervals $I_n$ with length of order $1/n$.

The proof of Theorem \ref{thm:graph} is based on comparison of the eigenvalue process $\nu_n^{(E)}$ with an auxiliary process $\eta_n^{(E)}$. In Section \ref{ssec:construction} we construct this process such that 
\[
\eta_n^{(E)}(I) = \sum_{x \in \G_n} b_x^{(E)}(I) \, ,
\] 
where $(b_x^{(E)}(I))_{x \in \G_n}$ are random Bernoulli variables. We define these variables in terms of local operators $H_n^{(x)}(\omega)$. For each $x \in \G_n$ the operator $H_n^{(x)}(\omega)$ is the restriction of $H_n(\omega)$ to the neighborhood $B_{R_n}(x)$, where $R_n \in \N$ is chosen in \eqref{r}. The construction of $\eta_n$ also depends on a small parameter $\tau_n$ chosen in \eqref{tau} that controls how well the local operators $H_n^{(x)}$ approximate $H_n$. 

The definition of $b_x$ in terms of these local operators allows to show that $b_x$ and $b_y$ are independent if $x$ and $y$ are sufficiently far away from each other, see Lemma~\ref{lem:independence}. Based on a local Minami estimate we  also show that $b_x b_y$ is typically zero if $x$ and $y$ are close to each other, see Lemma~\ref{lem:locminami}. In Section \ref{ssec:auxpoisson} we use these facts to prove that $\eta_n$ is close to a Poisson process:

\begin{proposition}
\label{pro:aux}
Let $\G_n$ satisfy \eqref{volbound} and \eqref{surfbound} and let $I \subset \R$ and $E \in \R$. For each $R_n \in \N$ and $0 < \tau_n \leq 1/6 \sqrt K n$ the bound 
\[
\left| \E \left[ \exp \lk - t \eta_n^{(E)}(I) \rk \right] - \exp \lk  - \E \left[ \eta_n^{(E)}(I) \right] \lk 1 - e^{-t} \rk \rk \right| \leq 81 \| \rho \|_\infty^2   \lk |I|+1 \rk^2 K^{8 R_n} n^{-1}
\]
holds for all $t \geq 0$.
\end{proposition}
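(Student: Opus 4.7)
The strategy is to compare Laplace transforms directly. Writing $s = 1 - e^{-t} \in [0,1]$ and $p_x := \E[b_x^{(E)}(I)]$, one has the algebraic identities
$$\E\bigl[e^{-t\eta_n^{(E)}(I)}\bigr] = \E\Bigl[\textstyle\prod_{x \in \G_n}(1-sb_x^{(E)}(I))\Bigr], \qquad \exp\bigl(-\E[\eta_n^{(E)}(I)](1-e^{-t})\bigr) = \textstyle\prod_x e^{-sp_x},$$
so the task reduces to bounding $\bigl|\E[\prod_x(1-sb_x)] - \prod_x e^{-sp_x}\bigr|$. The plan is to insert the intermediate product $\prod_x(1-sp_x)$ and split the difference into a ``Poisson error'' $|\prod_x(1-sp_x) - \prod_x e^{-sp_x}|$ and a ``dependence error'' $|\E[\prod_x(1-sb_x)] - \prod_x(1-sp_x)|$, and handle each separately.

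For the Poisson error, the elementary inequality $|1-u-e^{-u}|\leq u^2/2$ on $u\geq 0$ combined with a standard telescoping using $|1-sp_x|, e^{-sp_x}\in[0,1]$ gives
$$\bigl|\textstyle\prod_x(1-sp_x) - \prod_x e^{-sp_x}\bigr| \leq \tfrac12\textstyle\sum_x p_x^2 \leq \tfrac12(\max_x p_x)\bigl(\textstyle\sum_x p_x\bigr).$$
Both factors are controlled by Wegner-type bounds: $p_x\leq \|\rho\|_\infty |B_{R_n}(x)||I_n|\leq 3\|\rho\|_\infty K^{R_n}|I|/n$ from Wegner applied to the local operator $H_n^{(x)}$, and $\sum_x p_x\leq\|\rho\|_\infty(|I|+1)$ from Wegner applied to $H_n$ combined with the construction of $\eta_n$. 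The hypothesis $\tau_n\leq 1/(6\sqrt K n)$ is precisely what is required to keep the approximation error between $\eta_n$ and $\nu_n$ within the additive slack represented by the ``$+1$''.

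For the dependence error, fix any enumeration $x_1,\ldots,x_n$ of $\G_n$ and split each partial factor into ``far'' and ``near'' parts, $L_k := \prod_{i<k,\,d(x_i,x_k)>2R_n}(1-sb_{x_i})$ and $M_k := \prod_{i<k,\,d(x_i,x_k)\leq 2R_n}(1-sb_{x_i})$. Telescoping over $k$ and using $\prod_{j>k}(1-sp_{x_j})\in[0,1]$ yields
$$\bigl|\E[\textstyle\prod_i(1-sb_{x_i})] - \prod_i(1-sp_{x_i})\bigr|\leq\textstyle\sum_k\bigl|\E[L_kM_k(b_{x_k}-p_{x_k})]\bigr|.$$
The construction of $b_x$ (Lemma~\ref{lem:independence}) guarantees that $b_{x_k}$ depends only on $\omega|_{B_{R_n}(x_k)}$, hence is independent of $L_k$. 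Therefore $\E[L_k(b_{x_k}-p_{x_k})]=0$, and only the correlation piece $\E[L_k(M_k-1)(b_{x_k}-p_{x_k})]$ persists. Using $|L_k|\leq 1$, $|b_{x_k}-p_{x_k}|\leq b_{x_k}+p_{x_k}$, and the elementary bound $1-M_k\leq\sum_{i<k,\,\mathrm{near}}sb_{x_i}\leq\sum_{i<k,\,\mathrm{near}}b_{x_i}$, each summand is controlled by $\sum_{i<k,\,d(x_i,x_k)\leq 2R_n}\bigl(\E[b_{x_i}b_{x_k}]+p_{x_i}p_{x_k}\bigr)$. Summing in $k$ produces the global bound
$$\bigl|\text{dependence error}\bigr|\leq \textstyle\sum_x\sum_{y\in B_{2R_n}(x)\setminus\{x\}}\bigl(\E[b_xb_y]+p_xp_y\bigr),$$
whose first piece is controlled by the local Minami estimate (Lemma~\ref{lem:locminami}) and whose second is controlled by the Wegner bounds on $p_x$ and $\sum_x p_x$ together with $|B_{2R_n}(x)|\leq 3K^{2R_n}$.

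The main obstacle is the bookkeeping: one has to keep $L_k$ cleanly separated from $M_k$ in order to exploit the vanishing of $\E[L_k(b_{x_k}-p_{x_k})]$, and then carefully accumulate the various Wegner- and Minami-type factors. The dominant $\E[b_xb_y]$ sum contributes of order $\|\rho\|_\infty^2|I|^2 K^{4R_n}/n$, after which all subleading terms (the $p_xp_y$ sum of order $K^{3R_n}/n$ and the Poisson error of order $K^{R_n}/n$) together with all numerical constants are absorbed into the generous prefactor $81\|\rho\|_\infty^2(|I|+1)^2 K^{8R_n}$ of the statement. The inflated exponent $8R_n$ is harmless for the final application since $R_n$ will be chosen as a small multiple of $\log_K n$.
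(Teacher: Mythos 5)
Your strategy---expressing the Laplace transform as a product $\prod_x(1-sb_x)$ with $s=1-e^{-t}$, inserting $\prod_x(1-sp_x)$ as an intermediate, and then telescoping---is genuinely different from the paper's approach, which is a direct adaptation of the Chen--Stein method (Lemma~\ref{lem:poisson}). Both routes produce a bound of the same form, namely $\sum_x\sum_{y\sim x}\E[b_xb_y]+\sum_x\sum_{y\sim x}p_xp_y$, and your telescoping argument with the $L_k$/$M_k$ split is a perfectly valid and arguably more elementary way to get there: it bypasses the auxiliary Stein function $f$ and the identity \eqref{frep}.

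However, there is a concrete error in your treatment of the dependence range, and it is not just cosmetic. You assert, citing Lemma~\ref{lem:independence}, that ``$b_{x_k}$ depends only on $\omega|_{B_{R_n}(x_k)}$.'' That is not what the construction gives: the event $\{x\in\mathcal E_n\}$ involves comparing $|\psi^{(x)}(x)|$ with $|\psi^{(y)}(y)|$ for all $y\in\mathcal C(x)\subset B_{2R_n}(x)$, and each such $\psi^{(y)}$ depends on $(\omega_u)_{u\in B_{R_n}(y)}$. So $b_x$ is measurable with respect to $\omega|_{B_{3R_n}(x)}$, and $b_{x_1}\perp b_{x_2}$ is guaranteed only when $d(x_1,x_2)>6R_n$ --- this is exactly the $6R_n$ that appears in the statement of Lemma~\ref{lem:independence}. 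With your near/far cutoff at $2R_n$, the set $L_k$ is \emph{not} independent of $b_{x_k}$ (the balls $B_{3R_n}(x_i)$ and $B_{3R_n}(x_k)$ can overlap when $2R_n<d(x_i,x_k)\le 6R_n$), so the crucial cancellation $\E[L_k(b_{x_k}-p_{x_k})]=0$ fails. The fix is simply to set the cutoff to $6R_n$; then the dominant $\E[b_xb_y]$ term contributes on the order of $K^{8R_n}/n$ rather than the $K^{4R_n}/n$ you state, which is precisely why the paper's prefactor carries the exponent $8R_n$.

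A second, more minor issue: your bound $\sum_x p_x\le\|\rho\|_\infty(|I|+1)$ is not justified. Proposition~\ref{pro:aux} does not assume the fractional moment bound, so the comparison $\eta_n\le\nu_n$ from Lemma~\ref{lem:upper} is not available here, and even if it were it holds only on a high-probability event, not in expectation. The correct (and sufficient) estimate is the direct Wegner bound $p_x\le\|\rho\|_\infty|I_n||B_{R_n}(x)|\le 3\|\rho\|_\infty|I|K^{R_n}/n$, whence $\sum_x p_x\le 3\|\rho\|_\infty|I|K^{R_n}$. The resulting Poisson error $\frac12\sum_x p_x^2=O(K^{2R_n}/n)$ is still harmlessly dominated by the $K^{8R_n}/n$ main term.
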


To complete the proof of Theorem \ref{thm:graph} we have to compare $\nu_n^{(E)}(I)$ and $\mu_n^{(E)}(I)$. The exponential decay of eigenvectors allows to show the following estimate that we prove in Section~\ref{ssec:compare}. 

\begin{proposition}
\label{pro:compare}
Let $\G_n$ satisfy \eqref{volbound} and \eqref{surfbound} and let $I_0 \subset \R$ be such that the fractional moment bound \eqref{eq:fracmom} holds with $\mu_s > 2 \ln K$.  Let $R_n \in \N$ and $\tau_n \leq 1/6 \sqrt K n$ and set
\begin{equation}
\label{c}
C_n =  \tau_n^2 \,  e^{ (\mu_s - 2\ln K)R_n }  \, .
\end{equation}
Then there is a constant $C>0$  such that for each $E \in I_0$ and each  interval $I \subset \R$  the estimates   
\[
 \E \left[ \left| \eta_n^{(E)}(I) -  \nu_n^{(E)}(I) \right| \right]  \leq  C \sqrt K \lk 1 + \| \rho \|_\infty^2 + |I| \rk  \lk \tau_n n+ \tau_n K^{2R_n} + C_n^{-2(a-1)/a} \rk n
\]
and 
\[
 \E \left[ \left| e^{-t \eta_n^{(E)}(I)}  - e^{-t \nu_n^{(E)}(I) }\right| \right]  \leq  C \sqrt K \lk 1 + \| \rho \|_\infty^2 + |I| \rk  \lk \tau_n n+ \tau_n K^{2R_n} + C_n^{-2(a-1)/a} \rk 
\]
hold for all $t \geq 0$ and all $n \in \N$ with $I_n \subset I_0$. Here $1 < a < 2- \ln K / (\mu_s-\ln K)$ denotes the parameter from Proposition~\ref{pro:locas}.
\end{proposition}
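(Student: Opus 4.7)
The plan is to build a near-bijection between eigenvalues $E_j\in I_n$ of $H_n$ and vertices $x\in\G_n$ triggering $b_x^{(E)}(I)=1$, using the localization centers supplied by Proposition~\ref{pro:locas} applied to $I_n$, which gives $\E[X_n^{2(a-1)/a}]\leq \Cl |I|$ uniformly in $n$. I split the sample space according to whether $X_n$ exceeds the threshold $C_n$. On $\{X_n\leq C_n\}$ the eigenfunction tails outside $B_{R_n}(x_j)$ can be forced to be of order $\tau_n^2$, while Markov's bound gives $\p(X_n>C_n)\leq \Cl |I|\,C_n^{-2(a-1)/a}$, which combined with the crude deterministic bound $\eta_n+\nu_n\leq 2n$ produces the $C_n^{-2(a-1)/a}\cdot n$ contribution.

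On the good event, combining \eqref{eq:loccond} with \eqref{surfbound} and choosing $\mu\in[\mu_s-\tfrac{3}{2}\ln K,\,\mu_s-\ln K)$ (which is non-empty since $\mu_s>2\ln K$) yields
\[
 \sum_{x\notin B_{R_n}(x_j)}|\phi_j(x)|^2 \;\leq\; X_n^2\sum_{r>R_n}\tfrac{3}{2}K^r e^{-2\mu r} \;\leq\; C\,C_n^2\,e^{-(2\mu-\ln K)R_n} \;\leq\; C\tau_n^2\, ,
\]
since $C_n^2 e^{-(2\mu-\ln K)R_n}=\tau_n^4\, e^{R_n(2\mu_s-3\ln K-2\mu)}\leq \tau_n^2$ under that choice of $\mu$. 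Truncating $\phi_j$ to $B_{R_n}(x_j)$ and renormalizing produces a unit vector $\tilde\phi_j$ with $\|(H_n^{(x_j)}-E_j)\tilde\phi_j\|\leq C\tau_n$, so $H_n^{(x_j)}$ has an eigenvalue within $O(\tau_n)$ of $E_j$ localized near $x_j$, which triggers $b_{x_j}^{(E)}(I)=1$. Conversely, each vertex $x$ with $b_x^{(E)}(I)=1$ carries a local eigenvector whose trivial extension to $\G_n$ is an approximate eigenvector of $H_n$ with eigenvalue within $O(\tau_n)$, because $H_n-H_n^{(x)}$ is supported on $\partial B_{R_n}(x)$ where the local eigenvector is required to be small by the defining properties of $b_x$ (Section~\ref{ssec:construction}).

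The remaining discrepancy on the good event stems from eigenvalues lying within distance $O(\tau_n)$ of $\partial I_n$, which one process may count and the other miss. The Wegner estimate bounds the expected number of eigenvalues of $H_n$ in such a window by $\|\rho\|_\infty\tau_n n$; summing the analogous local bound over $x\in\G_n$ and accounting for the at most $|B_{2R_n}|\lesssim K^{2R_n}$ vertices that could detect a single global eigenvalue produces the $\tau_n n\cdot n$ and $\tau_n K^{2R_n}\cdot n$ contributions. The second, Laplace-transform estimate then follows from the first via the elementary inequality $|e^{-ta}-e^{-tb}|\leq |a-b|$ valid for $t\geq 0$ and $a,b\geq 0$.

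The main obstacle will be the combinatorial bookkeeping ensuring that the matching is essentially injective: a single global eigenvalue could in principle trigger $b_x=1$ at several nearby vertices, and two distinct local eigenvalues from nearby vertices could correspond to a single global eigenvalue. Both are controlled by the local Minami estimate of Lemma~\ref{lem:locminami}, but fusing it with the tail bound from Proposition~\ref{pro:locas} requires care in how the selection of the localization center $x_j$ interacts with the defining condition of $b_x$, so that the near-injectivity is preserved up to errors already absorbed into the Wegner-type boundary term.
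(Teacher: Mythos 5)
The core strategy is in line with the paper's: split on the tail event $\{X_n>C_n\}$ (controlled via Markov from Proposition~\ref{pro:locas} applied to $I_n$), on the boundary-window event, and on the local and global Minami-type events, then derive a bijection between $\sigma(H_n)\cap I_n$ and $\mathcal E_n$ on the complement of these bad events. However, there is a decisive gap in your treatment of the Laplace-transform bound. You propose to obtain the second estimate from the first via the inequality $\left|e^{-ta}-e^{-tb}\right|\leq|a-b|$, but this inequality is simply false for $t>1$ (the Lipschitz constant of $s\mapsto e^{-ts}$ is $t$, not $1$). More importantly, even if some such Lipschitz inequality were available, it would transfer the full factor of $n$ from the first estimate to the second, whereas the claimed second estimate has \emph{no} factor of $n$. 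The only way to get a bound independent of $n$ there is to use the trivial bound $\left|e^{-t\eta_n(I)}-e^{-t\nu_n(I)}\right|\leq 1$ together with the observation that on the good event the two counts are actually \emph{equal}, so that the expectation is bounded simply by the probability of the bad event. This is exactly what the paper does (Lemma~\ref{lem:goodset} gives $\eta_n^{(E)}(I)=\nu_n^{(E)}(I)$ on $\Omega_n'$, then $\E\big[|e^{-t\eta_n}-e^{-t\nu_n}|\big]\leq\p[(\Omega_n')^c]$), and your near-bijection framing, which allows a residual discrepancy on the good event, cannot reproduce it.

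Related to this, your plan leaves the injectivity and surjectivity of the matching genuinely unresolved, and flags this itself in the closing paragraph. In the paper this is not a "bookkeeping" afterthought: on the good event one must verify not merely $x_j\in\mathcal F_n$ but that a specific vertex in the cluster $\mathcal C(x_j)$ (the one maximizing $|\psi^{(y)}(y)|$) actually lies in $\mathcal E_n$, and that distinct $E_j$'s produce distinct such vertices. This requires the overlap lemmas (Lemmas~\ref{lem:approx}, \ref{lem:overlap}, \ref{lem:overlapbound}, \ref{lem:revapprox}) and the spacing control supplied by the events $\Delta(\epsilon_n)^c$ and $\Delta^{(x)}(\epsilon_n)^c$ in a specific way, including transferring the smallness of $\phi_j$ on $\partial B_n(x_j)$ to smallness of the local eigenvector $\psi^{(x_j)}$ on $\partial B_n(x_j)$ to verify condition~(ii) in the construction of $\mathcal F_n$ — a step your truncation-and-renormalize sketch skips. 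Without establishing exact equality on a good event of explicitly bounded bad probability, neither claimed estimate actually follows.
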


Based on these results we can prove Theorem \ref{thm:graph}. We have to choose the parameters $R_n \in \N$ and $\tau_n > 0$  in such way that the error terms in Proposition~\ref{pro:aux} and Proposition~\ref{pro:compare} become small. 

\begin{proof}[Proof of Theorem \ref{thm:graph}]
We estimate
\begin{align*}
\left| \E \left[ e^{  - t \nu_n^{(E)}(I) } \right] - e^{ - \E \left[ \nu_n^{(E)}(I) \right]  \lk 1 - e^{-t} \rk } \right|  \leq \,  &  \E \left[ \left|  e^{ - t \nu_n^{(E)}(I) } -  e^{- t \eta_n^{(E)}(I) } \right| \right]\\
& +  \left|  \E \left[ e^{- t \eta_n^{(E)}(I) } \right] - e^{ - \E \left[ \eta_n^{(E)}(I) \right]  \lk 1 - e^{-t} \rk } \right| \\
& +   1 -    e^{ -  \E \left[ \left| \eta_n^{(E)}(I)  -  \nu_n^{(E)}(I) \right|   \right] \lk 1 - e^{-t} \rk }  \, .
\end{align*}
We remark that $\mu$, $K$, $\|\rho\|_\infty$, and $|I|$ are bounded. We also note that for $E \in I_0$ we have $I_n \subset I_0$ for $n$ large enough. Under the assumption of Theorem~\ref{thm:graph} we can thus apply Proposition~\ref{pro:aux} and Proposition~\ref{pro:compare} to estimate
\begin{equation}
\label{errorterms}
\left| \E \left[ e^{  - t \nu_n^{(E)}(I) } \right] - e^{ - \E \left[ \nu_n^{(E)}(I) \right]  \lk 1 - e^{-t} \rk } \right| \! \leq \! C \! \lk \frac{K^{8R_n}}{n} \! + \! \lk \tau_n n+ \tau_n K^{2R_n} +  C_n^{-2(a-1)/a} \rk n \rk
\end{equation}
with $C_n$ given in \eqref{c} and with a constant $C> 0$ independent of $n$ and $t \geq 0$.

We optimize the error terms
\[
\frac{K^{8 R_n}}n \, , \qquad \tau_n n^2 \, , \qquad \tau_n^{-4(a-1)/a} \, e^{-2 (\mu_s-2\ln K) R_n (a-1)/a} \, n \, .
\]
This leads to
\begin{equation}
\label{tau}
\tau_n = \frac{K^{8R_n}}{n^3}
\end{equation}
and 
\begin{equation}
\label{r}
R_n = \left \lceil  \frac{(7a - 6) \ln n}{(a-1)\mu_s + (18a-14) \ln K } \right \rceil  \, .
\end{equation}
With this choice of parameters all error terms in \eqref{errorterms} are bounded by a constant times
\[
\exp \lk - \frac{(a-1)\mu_s - (38a-34) \ln K} {(a-1) \mu_s +(18a-14) \ln K } \ln n \rk \, .
\]
Thus to ensure that the bound tends to zero we have to show that
\begin{equation}
\label{end}
\lk \frac {\mu_s}{\ln K}  - 34 \rk \frac {a-1}a  > 4 \, .
\end{equation}
The assumption that $\mu_s \geq 43 \ln K$ implies that we can choose $\mu_s > 2 (\sqrt{101} + 11) \ln K$ and thus $a > (4 \sqrt{101} + 41)/(2\sqrt{101} + 21 )$ in  Proposition~\ref{pro:locas}.  This implies \eqref{end} and $\tau_n \leq 1/6 \sqrt K n$ for $n$ large enough and the proof of Theorem~\ref{thm:graph} is complete.
\end{proof}

In the next section we collect some general estimates that will be used in the subsequent proofs. In Section~\ref{sec:poisson} we use the methods from \cite{Che75} to derive an estimate about Poisson approximation of weakly dependent Bernoulli variables. In Section~\ref{sec:aux} we construct the auxiliary process $\eta_n$ and we prove Proposition~\ref{pro:aux} and Proposition~\ref{pro:compare}.

%%%%%%%%%%%%%%%%%%%%%%%%%%%%%%%%%

\section{Some general estimates}

In this section we record some general estimates that will be used in the proofs. We consider an arbitrary finite simple undirected connected graph $\G$ with degree $K+1$ and a random Schr\"odinger operator $H(\omega)$  in $\ell^2(\G)$ of the form \eqref{eq:op}. As before we denote by $(E_j)_{j=1}^{|\G|} = \sigma(H)$ and $(\phi_j)_{j=1}^{|\G|}$ the eigenvalues and corresponding $\ell^2(\G)$-normalized eigenvectors of $H(\omega)$. 

First we state the Wegner estimate about the mean number of eigenvalues in an interval \cite{Weg81}. For any $J \subset \R$,
\begin{equation}
\label{wegner}
\E \left[ N(J) \right] \leq \| \rho \|_{\infty} \, | \G |  \, |J| 
\end{equation}
and it follows that 
\begin{equation}
\label{wegner2}
\p \left[ N(J) \geq 1 \right] \leq \| \rho \|_{\infty} \, | \G |  \, |J| \, .
\end{equation}
This fundamental bound was generalized by Minami \cite{Min96} to higher numbers of eigenvalues, see also \cite{BelHisSto07,GraVag07,ComGerKle09,TauVes14}: For $k \in \N$, 
\begin{equation}
\label{minami}
\p \left[ N(J) \geq k \right] \leq \frac 1{k!} \, \| \rho \|_\infty^k \, |\G|^k \, |J|^k \, .
\end{equation}

Next we collect several consequences of exponential localization of eigenvectors.

\begin{lemma}
\label{lem:approx}
Let $\psi \in \ell^2(\G)$ be an approximate eigenvector of $H$. That means we assume that  there is a neighborhood  $\mathcal B \subset \G$ and constants $\lambda \in \R$ and $\tau > 0$  such that $
H \psi(x) = \lambda \psi(x)$ for $ x \in \mathcal B$,
$\psi(x) = 0$ for $x \notin \mathcal B$, $\| \psi \|=1$, and
\[
\sum_{x \in \partial \mathcal B} |\psi(x)|^2 \leq \tau^2 \, .
\]
Then there exists an eigenvalue $E_j \in \sigma(H)$ such that $|\lambda-E_j| \leq \sqrt K \tau$.

Moreover, if $N((\lambda-\epsilon,\lambda+\epsilon)) \leq 1$ for some $\epsilon > 0$ then the  eigenvector $\phi_j$ corresponding to $E_j$ satisfies 
\begin{equation}
\label{eq:evoverlap}
\left| \left< \psi,\phi_j \right> \right|^2 \geq 1 - K \tau^2 \epsilon^{-2}  \, .
\end{equation}
 
\end{lemma}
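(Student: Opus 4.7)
The plan is to reduce both conclusions to a single residual bound $\|(H-\lambda)\psi\|\leq\sqrt{K}\,\tau$ and then invoke standard spectral-theoretic reasoning.

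To obtain this bound I would compute $(H-\lambda)\psi$ pointwise. On $\mathcal B$ it vanishes by the hypothesis $H\psi=\lambda\psi$ there. For $x\notin\mathcal B$ we have $\psi(x)=0$, so the diagonal potential term and the $\lambda\psi$ term both drop out; only the adjacency piece of $H$ survives, and since $\psi$ is supported on $\mathcal B$, it sees only those neighbors of $x$ that lie inside $\mathcal B$, which by the definition of inner boundary must belong to $\partial\mathcal B$. Thus
\[
(H-\lambda)\psi(x) \;=\; -\!\!\!\sum_{y\in\partial\mathcal B,\;y\sim x}\!\!\psi(y)\qquad(x\notin\mathcal B).
\]
Squaring, summing over $x\notin\mathcal B$, and applying Cauchy--Schwarz on each inner sum followed by a careful double-counting of boundary edges using the degree bound $K+1$ (and noting that each inner-boundary vertex has at most $K$ neighbors outside $\mathcal B$, since at least one neighbor anchors it inside $\mathcal B$), I would arrive at
\[
\|(H-\lambda)\psi\|^2 \;\leq\; K\sum_{y\in\partial\mathcal B}|\psi(y)|^2 \;\leq\; K\tau^2.
\]

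The first claim is then immediate: since $\|\psi\|=1$, the spectral theorem gives $\dist(\lambda,\sigma(H))\leq\|(H-\lambda)\psi\|\leq\sqrt{K}\,\tau$, so some eigenvalue $E_j$ sits within $\sqrt{K}\,\tau$ of $\lambda$. For the second claim I would expand $\psi=\sum_k c_k\phi_k$ in the orthonormal eigenbasis of $H$, so that $\sum_k|c_k|^2=1$ and
\[
K\tau^2 \;\geq\; \|(H-\lambda)\psi\|^2 \;=\; \sum_k |c_k|^2\,(E_k-\lambda)^2.
\]
The overlap bound is nontrivial only when $K\tau^2\leq\epsilon^2$, and in that regime the first claim places $E_j$ inside $(\lambda-\epsilon,\lambda+\epsilon)$, while the gap hypothesis $N((\lambda-\epsilon,\lambda+\epsilon))\leq 1$ forces $|E_k-\lambda|\geq\epsilon$ for every $k\neq j$. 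Discarding the $k=j$ term of the sum above and using this separation yields $\epsilon^2(1-|c_j|^2)\leq K\tau^2$, which rearranges to the claimed estimate $|\langle\psi,\phi_j\rangle|^2\geq 1-K\tau^2\epsilon^{-2}$.

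The main obstacle is the combinatorial bookkeeping extracting the sharp constant $\sqrt{K}$ rather than the naive $K+1$ one gets by invoking the degree bound alone; this requires exploiting the asymmetry between inner-boundary vertices and the outside neighbors they connect to, together with the fact that cross-terms arising from common neighbors are either absent (in the locally tree-like case) or controllable. Once that combinatorial step is in place, the remainder of the proof is a short textbook perturbative argument.
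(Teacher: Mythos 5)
Your proof follows essentially the same route as the paper's: compute the residual $\|(H-\lambda)\psi\|^2$, observe that it reduces to a boundary sum bounded by $K\tau^2$, and then deduce the eigenvalue proximity and the overlap bound by expanding $\psi$ in the eigenbasis and separating the eigenvalue near $\lambda$ from those outside $(\lambda-\epsilon,\lambda+\epsilon)$. The combinatorial step you flag as the ``main obstacle'' is in fact dispatched by the paper in a single line with exactly the Cauchy--Schwarz bound you propose, so no further double-counting machinery is invoked; note, however, that your claim that each inner-boundary vertex ``has at least one neighbor anchoring it inside $\mathcal B$'' is not a consequence of the paper's definition of $\partial\mathcal B$ in general, but only of the fact that in the applications $\mathcal B$ is a ball of radius $\geq 1$.
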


\begin{proof}
The eigenvectors of $H(\omega)$ form a basis of $\ell^2(\G_n)$, hence
\[
\| (H - \lambda ) \psi \|^2 = \sum_{j = 1}^{|\G|} \left| \left< \psi,\phi_j \right> \right|^2 (\lambda-E_j)^2 \geq \min_j(\lambda-E_j)^2 \, .
\]
Moreover, by assumption,
\[
\| (H - \lambda ) \psi \|^2 = \sum_{x \notin \mathcal B} \left| \sum_{y: d(y,x) = 1} \psi(y) \right|^2 \leq K \sum_{y \in \partial \mathcal B} |\psi(y)|^2 \leq K \tau^2  \, .
\]
Combining these bounds yields the first claim.
To prove the second claim we use these relations again and estimate
\[
\sum_{i:|E_i - \lambda|\geq \epsilon} \left| \left< \psi,\phi_i \right> \right|^2 \leq \frac 1{\epsilon^2} \sum_{i:|E_i-\lambda| \geq \epsilon} \left| \left< \psi,\phi_i \right> \right|^2 (\lambda-E_i)^2 \leq  K \tau^2 \epsilon^{-2} \, .
\]
Since $\psi$ is normalized this  implies 
\[ 
\sum_{i:|E_i-\lambda| < \epsilon} \left| \left< \psi,\phi_i \right> \right|^2 \geq 1 - K  \tau^2 \epsilon^{-2}
\]
and the claim follows from the  assumption $N((\lambda - \epsilon, \lambda+\epsilon)) \leq 1$.
\end{proof}

Let us now show that the assumption about the number of eigenvalues is justified with high probability for appropriate intervals.

\begin{lemma}
\label{lem:double}
Let $I \subset \R$ be a bounded interval and for $\epsilon \leq  |I|$ let $\Delta(\epsilon)$ denote the event that there are two distinct eigenvalues $E_j, E_k \in I \cap \sigma(H)$ satisfying $|E_j - E_k| < \epsilon$. Then
\[
\p \left[ \Delta(\epsilon) \right] \leq 2 \| \rho \|_\infty^2 \, |I| \, \epsilon \, |\G|^2 \, .
\]
\end{lemma}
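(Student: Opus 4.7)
The plan is to reduce the event $\Delta(\epsilon)$ to an event of the form $\{N(J)\geq 2\}$ for short intervals $J$, and then apply the Minami estimate \eqref{minami} with $k=2$.

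First, I would partition $I$ into $M = \lceil |I|/\epsilon \rceil$ consecutive disjoint subintervals $I_1,\dots,I_M$, each of length at most $\epsilon$, and then form the overlapping pairs $J_i = I_i \cup I_{i+1}$ for $i=1,\dots,M-1$, each of length at most $2\epsilon$. The key point of this covering is the following combinatorial observation: if $E_j, E_k \in I$ are distinct with $|E_j - E_k| < \epsilon$, then letting $I_i$ be the subinterval containing the smaller of the two values, the larger one is forced to lie in $I_i \cup I_{i+1} = J_i$ (the subinterval $I_{M+1}$ never being needed, since an eigenvalue in $I_M$ has no room on its right within $I$). Hence on the event $\Delta(\epsilon)$ there exists some $i$ with $N(J_i) \geq 2$, i.e.,
\[
\Delta(\epsilon) \subset \bigcup_{i=1}^{M-1} \{ N(J_i) \geq 2 \} \, .
\]

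Next I would apply the Minami bound \eqref{minami} with $k=2$ to each $J_i$, yielding
\[
\p[N(J_i)\geq 2] \leq \tfrac{1}{2} \|\rho\|_\infty^2 \, |\G|^2 \, (2\epsilon)^2 = 2 \|\rho\|_\infty^2 \, |\G|^2 \, \epsilon^2 \, ,
\]
and then take a union bound over $i=1,\dots,M-1$. Since $M-1 \leq |I|/\epsilon$, this gives
\[
\p[\Delta(\epsilon)] \leq (M-1)\cdot 2\|\rho\|_\infty^2 \, |\G|^2 \, \epsilon^2 \leq 2\|\rho\|_\infty^2 \, |\G|^2 \, |I| \, \epsilon \, ,
\]
which is the desired estimate. (The assumption $\epsilon \leq |I|$ ensures $M \geq 1$ so that the covering is nonempty.)

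There is no serious obstacle here: the argument is a routine Minami-plus-covering computation. The only thing to be slightly careful about is the choice of covering, since a naive decomposition into disjoint intervals of length $\epsilon$ would miss pairs straddling a boundary, and a careless overlapping cover would produce a worse constant than $2$. The double-cover $J_i = I_i \cup I_{i+1}$ above is precisely what is needed to capture all close pairs while using only $M-1 \leq |I|/\epsilon$ intervals of length $2\epsilon$, which is what yields the sharp constant in the stated bound.
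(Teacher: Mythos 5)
Your proof is correct and takes essentially the same route as the paper: cover $I$ by roughly $|I|/\epsilon$ overlapping intervals of length $2\epsilon$ (the paper uses $J_m = [a+(m-1)\epsilon, a+(m+1)\epsilon]$ for $m = 1,\dots,\lfloor |I|/\epsilon\rfloor$), apply the Minami estimate \eqref{minami} with $k=2$ to each, and conclude by a union bound. One small point to make precise: for your combinatorial observation to hold you need $I_1,\dots,I_{M-1}$ to have length exactly $\epsilon$ and not merely ``at most $\epsilon$'' --- with, say, $M$ equal pieces of length $|I|/M < \epsilon$, a close pair with the smaller eigenvalue near the right end of $I_i$ could land in $I_{i+2}$ and be missed by $J_i = I_i \cup I_{i+1}$.
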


\begin{proof}
The proof is based on the Minami estimate \eqref{minami}. 
To apply this result we cover the interval $I$ by intervals of length $2 \epsilon$. Choose $a \in \R$ such that $I = [a, a + |I|]$. Let $M(\epsilon)= \lfloor |I|/\epsilon \rfloor$ and for $m = 1, \dots, M(\epsilon)$ we choose intervals $J_m = [a+(m-1) \epsilon, a + (m+1) \epsilon]$.

In the event $\Delta(\epsilon)$ there exists $m \in \{1, \dots, M(\epsilon) \}$ such that $E_j$ and $E_k$ both lie in $J_m$. Hence, by \eqref{minami} with $k=2$, we get
\[
\p \left[\Delta(\epsilon) \right] \leq \p \left[ \exists \, m \in \{1, \dots, M(\epsilon) \} \, : \, N(J_m) \geq 2 \right] \leq M(\epsilon) \frac 12 \|\rho \|^2_\infty |\G|^2 (2\epsilon)^2  
\]
and the result follows from the bound $M(\epsilon) \leq |I|/\epsilon$.
\end{proof}

Finally we need the following simple consequences of the bound \eqref{eq:evoverlap}.

\begin{lemma}
\label{lem:overlap}
Assume there are two normalized vectors $\psi_1, \psi_2 \in \ell^2(\G)$ and an eigenvector $\phi_j$ of $H$ such that the estimates $\left| \left<  \psi_1, \phi_j \right>  \right|^2 \geq 1 - \delta^2$ and $\left| \left< \psi_2, \phi_j \right> \right|^2 \geq 1 - \delta^2$ hold for some $\delta > 0$. Then we have 
\[
\left| \left<  \psi_1, \psi_2 \right> \right| \geq 1- 2\delta^2 \, .
\]
\end{lemma}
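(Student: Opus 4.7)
The plan is to reduce the claim to a straightforward calculation by decomposing each $\psi_i$ along the distinguished direction $\phi_j$ and its orthogonal complement. Concretely, I would write
\[
\psi_i = \alpha_i \phi_j + \beta_i \xi_i, \qquad i = 1, 2,
\]
where $\xi_i \in \ell^2(\G)$ is a unit vector perpendicular to $\phi_j$ and $\alpha_i, \beta_i \in \C$ satisfy $|\alpha_i|^2 + |\beta_i|^2 = 1$ because $\psi_i$ is normalized. The two overlap hypotheses then become $|\alpha_i|^2 = |\langle \psi_i, \phi_j\rangle|^2 \geq 1 - \delta^2$, which is equivalent to $|\beta_i|^2 \leq \delta^2$.

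Next I would compute the inner product in this basis. Using $\langle \phi_j, \phi_j\rangle = 1$ and $\langle \phi_j, \xi_i\rangle = 0$, one gets
\[
\langle \psi_1, \psi_2\rangle = \overline{\alpha_1}\alpha_2 + \overline{\beta_1}\beta_2 \langle \xi_1, \xi_2\rangle.
\]
By the reverse triangle inequality together with $|\langle \xi_1, \xi_2\rangle| \leq 1$,
\[
|\langle \psi_1, \psi_2\rangle| \geq |\alpha_1||\alpha_2| - |\beta_1||\beta_2| \geq (1-\delta^2) - \delta^2 = 1 - 2\delta^2,
\]
where in the second step I used $|\alpha_1||\alpha_2| \geq \sqrt{(1-\delta^2)(1-\delta^2)} = 1-\delta^2$ and $|\beta_i| \leq \delta$. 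This gives the stated bound; if $1 - 2\delta^2 \leq 0$ the inequality is trivial, so no additional case analysis is required.

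There is essentially no obstacle: the lemma is a two-line linear-algebra fact, and the only mild subtlety is making sure one lower-bounds $|\alpha_1 \alpha_2|$ by $1-\delta^2$ rather than by the weaker $1-2\delta^2$, which would lose a factor and yield only $1 - 3\delta^2$. Writing $|\alpha_i|^2 \geq 1-\delta^2$ and taking the square root explicitly avoids this pitfall.
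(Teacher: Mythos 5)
Your proof is correct and takes essentially the same route as the paper: both decompose $\psi_1,\psi_2$ into the $\phi_j$-component and the orthogonal complement, bound the $\phi_j$-term from below by $1-\delta^2$, and bound the cross term by $\delta^2$ via Cauchy--Schwarz (your $|\langle\xi_1,\xi_2\rangle|\leq 1$ is exactly that step). The only cosmetic difference is that the paper expands in the full eigenbasis of $H$ and then sums over $k\neq j$, whereas you work directly with the one-dimensional projection onto $\phi_j$, which is a mildly leaner way of writing the same argument.
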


\begin{proof}
The eigenvectors of $H$ form a basis of $\ell^2(\G)$, hence 
\[
\left< \psi_1 ,\psi_2 \right> = \sum_{k=1}^{|\G|} \left< \psi_1, \phi_k \right> \left< \phi_k, \psi_2 \right> =  \left< \psi_1, \phi_j \right> \left< \phi_j, \psi_2 \right> + \sum_{k \neq j}  \left< \psi_1, \phi_k \right> \left< \phi_k, \psi_2 \right> \, .
\]
By assumption, $\left| \left< \psi_1,\phi_j \right> \right| \,  \left| \left< \phi_j, \psi_2 \right> \right| \geq 1-\delta^2$ and $\sum_{k \neq j}  \left| \left< \psi_1, \phi_k \right> \right|^2 \leq \delta^2$ and the same bound holds for $\psi_2$. It follows that  
\[
\left| \left< \psi_1, \psi_2 \right> \right| \geq 1- \delta^2 - \lk \sum_{k \neq j}  \left| \left< \psi_1, \phi_k \right> \right|^2 \rk^{1/2} \lk \sum_{k \neq j}  \left| \left< \psi_2, \phi_k \right> \right|^2 \rk^{1/2} \geq 1 - 2 \delta^2 
\]
and the proof is complete.
\end{proof}

\begin{lemma}
\label{lem:overlapbound}
Assume that a normalized vector $\psi \in \ell^2(\G)$ satisfies $\left| \left< \psi, \phi_j \right> \right|^2 \geq 1 - \delta^2$ for an eigenvector $\phi_j$ of $H$ and $\delta > 0$. Then, for each $x \in \G$, 
\[
|\psi(x)| \leq |\phi_j(x)| + \delta \, .
\]
\end{lemma}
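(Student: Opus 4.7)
The plan is to expand $\psi$ in the eigenbasis $(\phi_k)$ of $H$ and isolate the contribution of $\phi_j$. Writing $\psi = \sum_k \langle \phi_k,\psi\rangle \phi_k$, we evaluate at $x$ to obtain
\[
\psi(x) = \langle \phi_j,\psi\rangle \phi_j(x) + \sum_{k \neq j} \langle \phi_k,\psi\rangle \phi_k(x) \, .
\]
Since $|\langle \phi_j,\psi\rangle| \leq 1$, the first term is bounded in absolute value by $|\phi_j(x)|$, so it suffices to control the remainder by $\delta$.

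For the remainder I would apply Cauchy--Schwarz:
\[
\left| \sum_{k \neq j} \langle \phi_k,\psi\rangle \phi_k(x) \right| \leq \lk \sum_{k \neq j} |\langle \phi_k,\psi\rangle|^2 \rk^{1/2} \lk \sum_{k \neq j} |\phi_k(x)|^2 \rk^{1/2} \, .
\]
The first factor equals $(1 - |\langle \phi_j,\psi\rangle|^2)^{1/2} \leq \delta$ by the hypothesis and the fact that $\psi$ is normalized. The second factor is bounded by $\lk \sum_k |\phi_k(x)|^2 \rk^{1/2} = \| \delta_x \| = 1$, because $(\phi_k)$ is an orthonormal basis of $\ell^2(\G)$ and this Parseval identity applied to $\delta_x$ gives $\sum_k |\langle \phi_k, \delta_x\rangle|^2 = \sum_k |\phi_k(x)|^2 = 1$. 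Combining yields the claim, and there is really no obstacle here: this is a one-line Parseval/Cauchy--Schwarz argument, strictly simpler than Lemma~\ref{lem:overlap}.
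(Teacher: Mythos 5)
Your proof is correct and is essentially identical to the paper's: both expand $\psi$ in the eigenbasis, bound the $\phi_j$ term by $|\phi_j(x)|$, and apply Cauchy--Schwarz together with the Parseval identity $\sum_k |\phi_k(x)|^2 = 1$ to bound the remainder by $\delta$.
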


\begin{proof}
Again we use that the eigenvectors of $H$ form a basis and estimate
\begin{align*}
|\psi(x)| &\leq \left| \left< \phi_j , \psi \right> \right| \left| \phi_j(x) \right| + \sum_{k \neq j} \left| \left< \phi_k , \psi \right> \right| \left| \phi_k(x) \right| \\
& \leq |\phi_j(x)| + \lk  \sum_{k \neq j} \left| \left< \phi_k , \psi \right> \right| ^2 \rk^{1/2} \lk \sum_{k=1}^{|\G|} |\phi_k(x)|^2 \rk^{1/2} \, .
\end{align*}
Thus the claim follows from the identity $\sum_{k=1}^{|\G|} |\phi_k(x)|^2 = 1$. 
\end{proof}

%%%%%%%%%%%%%%%%%%%%%%%%%%%

\section{Poisson approximation}
\label{sec:poisson}

Here we prove the following result about Poisson approximation of a weakly dependent random point  process on a graph $\mathcal G$.  This is the basis for the proof of Proposition~\ref{pro:aux}. We adapt the Chen-Stein method from \cite{Che75}. 

\begin{lemma}
\label{lem:poisson}
Let $(b_x)_{x \in \G}$ be a collection of random Bernoulli variables. Assume that there is $\varrho \in \N$ such that $b_x$ is independent of $(b_y)_{y \notin B_\varrho(x)}$ for all $x \in \G$. We write $\Lambda = \sum_{x \in \G} b_x$ and $\ol= \sum_{x \in \G} \E[b_x]$. Then the bound
\[
\left| \E \left[ e^{-t \Lambda} \right] - e^{  - \ol \lk 1-e^{-t} \rk }\right| \leq  \sum_{x \in \G} \sum_{y \in B_\varrho(x) \setminus \{x\} }  \E \left[ b_x b_y \right] + \sum_{x \in \G} \sum_{y \in B_\varrho(x)}  \E \left[ b_x \right] \E \left[ b_y \right]  
\]
holds for all $t \geq 0$.
\end{lemma}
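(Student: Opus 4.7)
The plan is to compare the Laplace transform $\phi(s) := \E[e^{-s\Lambda}]$ directly with the Poisson Laplace transform $\phi_h(s) := e^{-\ol(1-e^{-s})}$ by deriving an inhomogeneous linear ODE for $\phi$ whose homogeneous part is solved exactly by $\phi_h$. The only independence that I would invoke is the hypothesis in the one-sided form $b_x$ independent of $U_x$, where for each $x$ I decompose
\[
\Lambda = b_x + S_x + U_x, \qquad S_x := \sum_{y \in B_\varrho(x)\setminus\{x\}} b_y, \qquad U_x := \sum_{y \notin B_\varrho(x)} b_y.
\]
Note that $S_x$ need not be independent of $U_x$, and both $b_x$ and $S_x$ carry the local dependence.

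Differentiating $\phi$ and using $b_x^2 = b_x$ to pull out a factor $e^{-s}$, the next step is to write
\[
\phi'(s) = -e^{-s} \sum_{x \in \G} \E\bigl[b_x e^{-s(S_x + U_x)}\bigr],
\]
then exploit $b_x \perp U_x$ to split
\[
\E\bigl[b_x e^{-s(S_x+U_x)}\bigr] = \E[b_x]\,\E[e^{-sU_x}] + \E\bigl[b_x e^{-sU_x}(e^{-sS_x}-1)\bigr],
\]
and finally reintroduce $\phi(s)$ via $\E[e^{-sU_x}] = \phi(s) + \E[e^{-sU_x}(1 - e^{-s(b_x+S_x)})]$. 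The two residual expectations will be controlled by the telescoping inequality $1 - r^n \le n(1-r)$ for $r \in [0,1]$, $n \in \N_0$, applied with $r = e^{-s}$, yielding
\[
\bigl|\E[e^{-sU_x}] - \phi(s)\bigr| \leq (1-e^{-s}) \sum_{y \in B_\varrho(x)} \E[b_y],
\]
\[
\bigl|\E[b_x e^{-sU_x}(e^{-sS_x}-1)]\bigr| \leq (1-e^{-s}) \sum_{y \in B_\varrho(x)\setminus\{x\}} \E[b_x b_y].
\]
Summing over $x$ then produces the ODE
\[
\phi'(s) + \ol e^{-s}\phi(s) = r(s), \qquad \phi(0)=1, \qquad |r(s)| \leq e^{-s}(1-e^{-s})\, Q,
\]
where $Q$ denotes the bracketed right-hand side of the lemma.

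Since $\phi_h$ solves the homogeneous problem with the same initial datum, variation of parameters gives
\[
\phi(t) - \phi_h(t) = \phi_h(t) \int_0^t \frac{r(u)}{\phi_h(u)}\, du,
\]
and the substitution $v = 1 - e^{-u}$ together with one integration by parts renders the driving integral explicit:
\[
\phi_h(t) \int_0^t \frac{e^{-u}(1-e^{-u})}{\phi_h(u)}\, du = \frac{1-e^{-t}}{\ol} - \frac{1-\phi_h(t)}{\ol^2}.
\]
The elementary inequality $e^{-\mu} \leq 1 - \mu + \mu^2/2$ for $\mu \geq 0$ shows that this quantity is bounded by $1/2$ uniformly in $\ol > 0$ and $t \geq 0$; the degenerate case $\ol = 0$ is trivial since then $\Lambda \equiv 0$ almost surely. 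Combining gives $|\phi(t) - \phi_h(t)| \leq Q/2 \leq Q$, which is the statement of the lemma. The main obstacle is organising the decomposition so that only the one-sided independence $b_x \perp U_x$ is used — we cannot factor the expectation across $S_x$ — and so that every error contribution takes the common form $(1-e^{-s})$ times a combinatorial quantity that reassembles into $Q$ after summation; once the ODE is in place, the explicit integration is routine.
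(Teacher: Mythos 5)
Your argument is correct, and it takes a genuinely different route from the paper's. The paper implements the Chen--Stein method in its discrete form: it introduces a function $f:\N_0\to\R$ solving the Stein equation $e^{-tm}-e^{-\ol(1-e^{-t})}=mf(m)-\ol f(m+1)$, bounds $\E[\Lambda f(\Lambda)-\ol f(\Lambda+1)]$ by a cascade of telescoping differences of the form $b_{y_j}\,\partial f(Y_{j-1}^{(x)}+1)$ over $y_j\in B_\varrho(x)$, and finishes with the crude uniform bound $\sup_m|\partial f(m)|\leq 1$. You instead differentiate the Laplace transform to obtain the linear ODE $\phi'(s)+\ol e^{-s}\phi(s)=r(s)$ and solve by variation of parameters, with the elementary telescoping inequality $1-r^n\leq n(1-r)$ doing the job that the paper's chain of $\partial f$ increments does. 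Both proofs exploit exactly the same independence (only $b_x\perp U_x$) and arrive at exactly the same quantity $Q$. What your route buys is a sharper constant and more transparency: the exact evaluation
\[
\phi_h(t)\int_0^t\frac{e^{-u}(1-e^{-u})}{\phi_h(u)}\,du=\frac{\mu-1+e^{-\mu}}{\ol^{\,2}}\leq\frac12,\qquad \mu=\ol\lk 1-e^{-t}\rk,
\]
replaces the lossy bound $|\partial f|\leq 1$, so you in fact prove the statement with $Q/2$ on the right-hand side — a factor-of-two gain that is immaterial for the application but makes the mechanism clearer. Two small checks that deserve an explicit word if you write this up: the passage from $b_x e^{-s\Lambda}$ to $e^{-s}b_x e^{-s(S_x+U_x)}$ uses $b_x^2=b_x$, and the variation-of-parameters bound $|\phi(t)-\phi_h(t)|\leq\phi_h(t)\int_0^t|r(u)|/\phi_h(u)\,du$ uses that $\phi_h>0$; both are fine.
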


\begin{proof}
We fix $t \geq 0$ and for $x \in \G$ we write $p_x = \E \left[ b_x \right]$. We define a function $f : \N_0 \to \R$ depending on $t$ and $\ol$. We set $f(0) = 0$ and for $m \in \N$ 
\[
f(m) = - \int_{e^{-t}}^1 e^{- \ol ( y - e^{-t})} y^{m-1} dy  = -(m-1)! \, \ol^{-m}  \sum_{k = 0}^{m-1}   \frac {\ol^k}{k!} \lk  e^{-tk} - e^{- \ol (1-e^{-t})} \rk 
\]
such that, for $m \in \N_0$, 
\begin{equation}
\label{frep}
e^{-t m} - e^{-\ol(1-e^{-t})} = m f(m) - \ol  f(m+1) \, .
\end{equation}
Below we will show that 
\begin{equation}
\label{fdiff}
\left| \E \left[ \Lambda f(\Lambda) - \ol f(\Lambda+1) \right] \right| \leq \sup_{m \in \N_0}  \left| \partial f(m) \right|  \sum_{x \in \G} \! \lk \sum_{y \in B_\varrho(x) \setminus \{x\}} \! \E \left[ b_x b_y  \right] + \sum_{y \in B_\varrho(x)} \! p_x p_y\rk
\end{equation}
with $\partial f (m) = f(m+1) - f(m)$. 
An  elementary estimate shows that $|\partial f(m)| \leq  1$ for all $m \in \N_0$ and all $t \geq 0$ and $\ol \geq 0$. Thus the claim of the proposition follows from \eqref{frep} and \eqref{fdiff}.

To prove \eqref{fdiff} we write 
\[
\Lambda^{(x)} = \sum_{y \in \G \setminus \{x\} } b_y \quad  \textnormal{and}  \quad \Gamma^{(x)} = \sum_{ y \in \G \setminus  B_\varrho(x) } b_y \, .
\]
Then for all $x \in \G$ we have $b_x f(\Lambda) = b_x f(\Lambda^{(x)}+1)$. Hence, we can write 
\begin{align*}
\Lambda f(\Lambda) - \ol f(\Lambda+1) = & \sum_{x \in \G}  b_x \lk f( \Lambda^{(x)} + 1) - f(\Gamma^{(x)} +1 ) \rk + \sum_{x \in \G} (b_x - p_x) f(\Gamma^{(x)} +1 )  \\
& + \sum_{x \in \G} p_x \lk f(\Gamma^{(x)} +1) - f(\Lambda+1) \rk \, .
\end{align*}
By assumption, $b_x$ and $\Gamma^{(x)}$ are independent so the expectation of the second summand is zero. Thus we obtain 
\begin{align}
\nonumber
\E \left[ \Lambda f(\Lambda) - \ol f(\Lambda+1)  \right] = & \sum_{x \in \G} \E \left[  b_x \lk f( \Lambda^{(x)} + 1) - f(\Gamma^{(x)} +1 ) \rk \right] \\
\label{fdiffexp}
& + \sum_{x \in \G} p_x \E \left[  f(\Gamma^{(x)}+1) - f(\Lambda+1) \right]\, .
\end{align}

Let us now fix $x \in \G$ and let $x = y_1, y_2, y_3, \dots, y_{|B_\varrho(x)|}$ denote the vertices in $B_\varrho(x)$. We write $Y_0^{(x)} = \Gamma^{(x)}$ and $Y_j^{(x)} = \Gamma^{(x)} + \sum_{k = 1}^j b_{y_k}$, for $j = 1 , \dots, |B_\varrho(x)|$, such that $Y^{(x)}_{|B_\varrho(x)|} = \Lambda$. Then, for $j = 1, \dots, |B_\varrho(x)|$, we have 
\[
f(Y_{j-1}^{(x)} +1) - f(Y^{(x)}_j + 1) = b_{y_j} \lk f(Y^{(x)}_{j-1} + 1) - f(Y_{j-1}^{(x)} + 2) \rk
\]
and it follows that 
\begin{align}
\nonumber
f(\Gamma^{(x)} +1 ) - f(\Lambda + 1) &= \sum_{j=1}^{|B_\varrho(x)|} \lk f(Y^{(x)}_{j-1} + 1) - f(Y^{(x)}_j +1) \rk  \\
\nonumber
&=   \sum_{j=1}^{|B_\varrho(x)|} b_{y_j} \lk f(Y^{(x)}_{j-1} + 1) - f(Y^{(x)}_{j-1} +2) \rk \\
\label{fdiffy}
&=  - \sum_{j=1}^{|B_\varrho(x)|} b_{y_j} \partial f(Y^{(x)}_{j-1} +1) \, .
\end{align}

Similarly, we write $Z_1^{(x)} = \Gamma^{(x)}$ and $Z_j^{(x)} = \Gamma^{(x)} + \sum_{k = 2}^j b_{y_k}$, for $j = 2, \dots, |B_\varrho(x)|$, such that $Z^{(x)}_{|B_\varrho(x)|} = \Lambda^{(x)}$.  Then, for $j = 2, \dots, |B_\varrho(x)|$, we have 
\[
f(Z_j^{(x)} +1) - f(Z^{(x)}_{j-1} + 1) = b_{y_j} \lk f(Z^{(x)}_{j-1} + 2) - f(Z_{j-1}^{(x)} + 1) \rk
\]
and it follows that 
\begin{align}
\nonumber
f(\Lambda^{(x)} +1 ) - f(\Gamma^{(x)} + 1) &= \sum_{j=2}^{|B_\varrho(x)|} \lk f(Z^{(x)}_j + 1) - f(Z^{(x)}_{j-1} +1) \rk  \\
\nonumber
&=   \sum_{j=2}^{|B_\varrho(x)|} b_{y_j} \lk f(Z^{(x)}_{j-1} + 2) - f(Z^{(x)}_{j-1} +1) \rk \\
\label{fdiffz}
&=   \sum_{j=2}^{|B_\varrho(x)|} b_{y_j} \partial f(Z^{(x)}_{j-1} +1) \, .
\end{align}
Inserting \eqref{fdiffy} and \eqref{fdiffz} into \eqref{fdiffexp} we find
\begin{align*}
\E \left[ \Lambda f(\Lambda) - \ol f(\Lambda+1)  \right] = & \sum_{x \in \G} \sum_{j=2}^{|B_\varrho(x)|} \E \left[ b_x  b_{y_j} \partial f(Z^{(x)}_{j-1}+1) \right] \\
 & - \sum_{x \in \G} \sum_{j=1}^{|B_\varrho(x)|}  p_x \E \left[b_{y_j} \partial f(Y^{(x)}_{j-1}+1) \right]
\end{align*}
and \eqref{fdiff} follows. This completes the proof.
\end{proof}

\section{The auxiliary process}
\label{sec:aux}

In this section we construct an auxiliary process $\eta_n^{(E)}$ and we show that it is close to the eigenvalue process and to a Poisson process. In particular we prove Proposition~\ref{pro:aux} and Proposition~\ref{pro:compare}. We fix a bounded interval $I \subset \R$ and we set  $I_n = E + I/n$.  The construction of $\eta_n$ depends on the parameters $R_n \in \N$ and $0 < \tau_n \leq 1/6\sqrt K n$ specified in \eqref{r} and \eqref{tau}.
To shorten notation we write $B_n(x) = B_{R_n}(x)$.

\subsection{Construction of the process}
\label{ssec:construction}

For each $x \in \G_n$ we define an auxiliary operator $H_n^{(x)}(\omega)$ to be the restriction of $H_n(\omega)$ to $\ell^2 \lk B_n(x) \rk$ with Neumann boundary conditions: 
\[
\lk H_n^{(x)}(\omega) \phi \rk(y) = - \sum_{\begin{subarray}{c} u \in B_n(x) \\ d(u,y) =1 \end{subarray}} \phi(u) + \alpha \, \omega_y \phi(y) 
 \]
for $y \in B_m(x)$ and  $(H_n^{(x)}(\omega) \phi )(y) = 0$ for $y \notin B_m(x)$. We emphasize that the operator $H_n^{(x)}(\omega)$ and therefore also its eigenvectors and eigenvalues are independent of the values of the potential $V_n(\omega)$ outside of the neighborhood $B_n(x)$ and thus depend only on $(\omega_y)_{y \in B_n(x)}$. To shorten notation we do not always write the dependence on $\omega$.

Next we construct a random subset $\mathcal F_n \subset \G_n$ as follows: A vertex $x \in \G_n$ belongs to $\mathcal F_n$ if and only if 
\begin{itemize}
\item[(i)] There exists an eigenvalue $\xi^{(x)} \in \sigma \lk H_n^{(x)} \rk \cap I_n$ and
\item[(ii)] the corresponding  $\ell^2(B_n(x))$-normalized eigenvector $\psi^{(x)}$ satisfies 
\[ 
\sum_{y \in \partial B_n(x)} |\psi^{(x)}(y)|^2 \leq \tau_n^2 \, .
\]
\end{itemize}
For $x \in \mathcal F_n$ we extend the function $\psi^{(x)}$ by zero to $\G_n$ and for $x \in \G_n \setminus \mathcal F_n$ we set $\xi^{(x)} = 0$ and $\psi^{(x)} \equiv 0$ on $\G_n$.

For  $x \in \mathcal F_n$ the function $\psi^{(x)}$ is an approximate eigenvector in the sense of Lemma~\ref{lem:approx}. Indeed, $\| \psi^{(x)} \| = 1$ and $\psi^{(x)}(y) = 0$ for $y \notin B_n(x)$. Moreover, since $H_n^{(x)}$ and $H_n$ coincide locally in $B_n(x)$  this also implies  that $H_n  \psi^{(x)} (y) = \xi^{(x)}  \psi^{(x)}(y)$ for $y \in B_n(x)$. Thus from Lemma~\ref{lem:approx} we get the following error estimate:
For any $x \in \mathcal F_n$ there exists and eigenvalue $E_j$ of $H_n$ such that the bound
\begin{equation}
\label{error}
\left| E_j - \xi^{(x)} \right| \leq \sqrt K \tau_n
\end{equation}
holds.  

\begin{remark}
Loosely speaking, if a vertex $x$ lies in $\mathcal F_n$ then an eigenvector of $H_n$ should be located close to $x$. Indeed, the bound \eqref{eq:evoverlap} from Lemma \ref{lem:approx} shows that typically the eigenvector $\phi_j$ of $H_n$ corresponding to $E_j$ has large overlap with $\psi^{(x)}$. This suggests that $\mathcal F_n$ is decomposed into clusters, neighborhoods around the vertices $x_j$, $j = 1, \dots, \nu_n(I)$, the localization centers of the eigenvectors of $H_n$. So to construct a process close to $\nu_n$ we have to count the number of clusters and we have to thin out the set $\mathcal F_n$.
\end{remark}

To construct a further subset $\mathcal E_n$ of $\mathcal F_n$ we define, for $x \in \mathcal F_n$, a cluster
\[
\mathcal C(x) = \left\{ y \in \mathcal F_n \cap B_{2 R_n}(x) \, : \,  \left| \xi^{(x)} - \xi^{(y)} \right| \leq 2 \sqrt K \tau_n \right\} \, .
\]
For $x \in \G_n$ we set $x \in \mathcal E_n$ if and only if $x \in \mathcal F_n$ and 
\[
\left| \psi^{(x)} (x)\right| \geq \left|\psi^{(y)}(y) \right|
\]
for all $y \in \mathcal C(x)$. We write $\eta_n^{(E)}(I)$ for the number of vertices in $\mathcal E_n$.
Finally, for each $x \in \G_n$ we define a random Bernoulli variable $b_x = 1$ if $x \in \mathcal E_n$  and $b_x = 0$ otherwise so that 
\begin{equation}
\label{eq:auxdef}
\eta_n^{(E)}(I) = |\mathcal E_n| = \sum_{x \in \G_n} b_x \, .
\end{equation}
We remark that the random sets $\mathcal F_n$ and $\mathcal E_n$ and the random variables $(b_x)_{x \in \G_n}$ depend on the interval $I$ and on $E \in I_0$. However, we fix $I$ and $E$ throughout this section, so we often omit writing the dependence on $I$ and $E$.

%%%%%%%%%%%%%%%%%%%%%%%%%%%%%%%%%%

\subsection{Proof of Propostion \ref{pro:aux}: Independence and Poisson approximation} 
\label{ssec:auxpoisson}

In order to show that $\eta_n$ is close to a Poisson process we use Lemma~\ref{lem:poisson}. To apply this result to the auxiliary process we  need the following lemmas.

\begin{lemma}
\label{lem:independence}
Assume that $x_1, x_2 \in \G_n$ satisfy $d(x_1,x_2) > 6R_n$. Then $b_{x_1}$ and $b_{x_2}$ are independent.
\end{lemma}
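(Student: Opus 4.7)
The plan is to show that each $b_x$ is measurable with respect to the $\sigma$-algebra generated by $(\omega_u)_{u \in B_{3R_n}(x)}$. Once this is established, the assumption $d(x_1,x_2) > 6R_n$ forces $B_{3R_n}(x_1) \cap B_{3R_n}(x_2) = \emptyset$ (by the triangle inequality), and independence of $b_{x_1}$ and $b_{x_2}$ then follows immediately from the assumed independence of the family $(\omega_x)_{x \in \G_n}$.

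To establish the measurability claim, I would first unpack the construction of Section~\ref{ssec:construction}. The operator $H_n^{(x)}(\omega)$ is the restriction of $H_n(\omega)$ to $\ell^2(B_n(x)) = \ell^2(B_{R_n}(x))$ with Neumann boundary conditions, so $H_n^{(x)}$ and therefore its entire spectrum and its eigenvectors are functions of $(\omega_u)_{u \in B_{R_n}(x)}$ alone. In particular, the event that $x \in \mathcal F_n$, together with the choice of $\xi^{(x)}$ and $\psi^{(x)}$, depends only on $(\omega_u)_{u \in B_{R_n}(x)}$.

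Next, to decide whether $x \in \mathcal E_n$, we need to examine the cluster
\[
\mathcal C(x) = \left\{ y \in \mathcal F_n \cap B_{2R_n}(x) \, : \, |\xi^{(x)} - \xi^{(y)}| \leq 2 \sqrt K \tau_n \right\}
\]
and the values $|\psi^{(y)}(y)|$ for $y \in \mathcal C(x)$. For every $y \in B_{2R_n}(x)$, the previous paragraph shows that the event $y \in \mathcal F_n$ as well as $\xi^{(y)}$ and $\psi^{(y)}$ are measurable with respect to $(\omega_u)_{u \in B_{R_n}(y)}$, and $B_{R_n}(y) \subseteq B_{3R_n}(x)$. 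Hence $b_x$ is measurable with respect to $(\omega_u)_{u \in B_{3R_n}(x)}$, as claimed.

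The only step that requires any care is the triangle-inequality argument for disjointness: if some $u$ lay in both $B_{3R_n}(x_1)$ and $B_{3R_n}(x_2)$, then $d(x_1,x_2) \leq d(x_1,u) + d(u,x_2) \leq 6R_n$, contradicting the hypothesis. No genuine obstacle arises, and the lemma is immediate once the measurability is tracked through the two-stage construction of $\mathcal F_n$ and $\mathcal E_n$.
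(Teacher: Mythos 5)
Your proposal is correct and matches the paper's proof essentially step for step: you establish that $b_x$ is measurable with respect to $(\omega_u)_{u\in B_{3R_n}(x)}$ by tracking the two-stage construction ($\mathcal F_n$ via $H_n^{(y)}$ on $B_{R_n}(y)$, then $\mathcal E_n$ via the cluster $\mathcal C(x)\subset B_{2R_n}(x)$), and then conclude by disjointness of the $3R_n$-balls. Nothing to add.
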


\begin{proof}
For any vertex $y \in \G_n$ the event $\{y \in \mathcal F_n \}$ depends only on the eigenvalues and eigenvectors of the operator $H_n^{(y)}$ and is thus measurable with respect to $(\omega_u)_{u \in B_n(y)}$. Accordingly, also the random variables $\xi^{(y)}$ and $|\psi^{(y)}(y)|$ are measurable with respect to $(\omega_u)_{u \in B_n(y)}$.
In turn, any event $\{x \in \mathcal E_n \}$ depends only on the random variables $\xi^{(y)}$ and $|\psi^{(y)}(y)|$ with $y \in B_{2R_n}(x)$. Hence the event $\{ x \in \mathcal E_n \}$ is measurable with respect to 
\[
\bigcup_{y \in B_{2 R_n}(x)} (\omega_u)_{u \in B_n(y)} = (\omega_u)_{u \in B_{3R_n}(x)} \, .
\]
We see that the Bernoulli variable $b_x$ depends only on the random potential in the neighborhood $B_{3R_n}(x)$. This implies the result.
\end{proof}

\begin{lemma}
\label{lem:locminami}
For two distinct vertices $x, y \in \G_n$ we have
\[
\E \left[ b_x b_y \right] \leq  18 \| \rho \|_\infty^2 \lk |I| + 6 \sqrt K n \tau_n \rk^2  K^{2 R_n} n^{-2} \, .
\]
\end{lemma}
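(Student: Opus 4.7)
The plan is to show that on the event $\{b_xb_y=1\}$ there are, almost surely, two distinct eigenvalues of a suitable local operator inside an interval of length of order $1/n$, and then to bound this event either by multiplying independent local Wegner estimates (when $x$ and $y$ are far apart) or by a local Minami estimate (when they are close). The crucial input from the definition of $\mathcal E_n$ is the following maximality observation: if $x,y\in\mathcal E_n$ with $x\neq y$, if $y\in B_{2R_n}(x)$, and if $|\xi^{(x)}-\xi^{(y)}|\leq 2\sqrt K\tau_n$, then $y\in\mathcal C(x)$ and symmetrically $x\in\mathcal C(y)$, so the two maximality requirements force $|\psi^{(x)}(x)|=|\psi^{(y)}(y)|$. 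This equality is a codimension-one condition on $\omega$ and carries zero probability under the absolutely continuous single-site law. Hence, up to a null event, either $d(x,y)>2R_n$ or $|\xi^{(x)}-\xi^{(y)}|>2\sqrt K\tau_n$.

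\emph{Far case} ($d(x,y)>2R_n$). Then $B_n(x)\cap B_n(y)=\emptyset$, so the operators $H_n^{(x)}$ and $H_n^{(y)}$ depend on disjoint families of potentials and are independent. Applying the Wegner estimate \eqref{wegner} inside $B_n(x)$ gives
\[
\p[b_x=1]\leq \p[\sigma(H_n^{(x)})\cap I_n\neq\emptyset]\leq \|\rho\|_\infty |B_n(x)|\,|I_n|\leq 3\|\rho\|_\infty K^{R_n}|I|/n,
\]
and the analogous bound for $y$. Multiplying yields $\E[b_xb_y]\leq 9\|\rho\|_\infty^2 K^{2R_n}|I|^2/n^2$, well below the claimed bound.

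\emph{Close case} ($d(x,y)\leq 2R_n$). Set $\mathcal B=B_n(x)\cup B_n(y)$, so $|\mathcal B|\leq 6K^{R_n}$, and let $\tilde H$ denote the restriction of $H_n$ to $\ell^2(\mathcal B)$. This is again a random Schr\"odinger operator, to which the Minami estimate \eqref{minami} applies with dimension $|\mathcal B|$. Since $\psi^{(x)}$ is supported in $B_n(x)\subset\mathcal B$ and $(H_n-\xi^{(x)})\psi^{(x)}$ vanishes on $B_n(x)$, the same boundary computation as in the proof of Lemma~\ref{lem:approx} gives $\|(\tilde H-\xi^{(x)})\psi^{(x)}\|\leq \sqrt K\tau_n$, and similarly for $y$. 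Hence $\tilde H$ has eigenvalues within $\sqrt K\tau_n$ of each of $\xi^{(x)}$ and $\xi^{(y)}$, and the separation $|\xi^{(x)}-\xi^{(y)}|>2\sqrt K\tau_n$ forces these to be two distinct eigenvalues in the inflated interval $\tilde I_n$ obtained from $I_n$ by enlarging by $\sqrt K\tau_n$ on each side, so $|\tilde I_n|=|I|/n+2\sqrt K\tau_n$. Minami applied to $\tilde H$ then gives
\[
\p[b_xb_y=1]\leq \frac{1}{2}\|\rho\|_\infty^2|\mathcal B|^2|\tilde I_n|^2\leq 18\|\rho\|_\infty^2 K^{2R_n}(|I|+2\sqrt K n\tau_n)^2/n^2,
\]
again within the claimed bound (the factor $6$ rather than $2$ in the statement is a comfortable safety margin).

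The principal obstacle is the measure-zero step: one must justify that the exceptional configuration $|\psi^{(x)}(x)|=|\psi^{(y)}(y)|$ contributes nothing to $\E[b_xb_y]$. This rests on absolute continuity of $\rho$ together with the generic real-analytic dependence of the eigenvector entries of $H_n^{(x)}$ and $H_n^{(y)}$ on the potential variables $(\omega_u)$. Once that is in place, the remainder is bookkeeping: verifying that the boundary estimate for $\tilde H$ goes through verbatim, bounding $|\mathcal B|\leq 6K^{R_n}$, and collecting the constants.
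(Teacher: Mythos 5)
Your proof is correct, and it reaches the claimed bound, but it takes a genuinely different route from the paper's. The paper keeps the argument unified over all distances $d(x,y)$: it assumes for contradiction that $H_n^{(x,y)}$ has only one eigenvalue in $\tilde I_n$, then uses Lemma~\ref{lem:approx} twice to produce nearby eigenvalues $\mu^{(x)},\mu^{(y)}$ of $H_n^{(x,y)}$, forces them to coincide, and then (crucially) uses the overlap bounds of Lemma~\ref{lem:approx} together with Lemma~\ref{lem:overlap} to \emph{derive} $d(x,y)\le 2R_n$ and $|\xi^{(x)}-\xi^{(y)}|\le 2\sqrt K\tau_n$, landing in the null event $\{|\psi^{(x)}(x)|=|\psi^{(y)}(y)|\}$. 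You instead invoke that null event at the outset to obtain the dichotomy ``$d(x,y)>2R_n$ or $|\xi^{(x)}-\xi^{(y)}|>2\sqrt K\tau_n$,'' and then split deterministically on the distance. In the far case you do not need Minami at all: independence of $H_n^{(x)}$ and $H_n^{(y)}$ plus two local Wegner estimates suffices, which the paper never uses. In the close case the separation $|\xi^{(x)}-\xi^{(y)}|>2\sqrt K\tau_n$ makes the two eigenvalues of $H_n^{(x,y)}$ distinct directly, so you can skip Lemma~\ref{lem:overlap} entirely. Both routes are valid; yours is arguably a little more transparent because the role of the null event is made explicit up front and the eigenvector-overlap machinery is avoided, at the modest cost of a two-case bookkeeping. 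The only point worth tightening is the multiplication step in the far case: you should say explicitly that $b_x\le\mathbf 1[\sigma(H_n^{(x)})\cap I_n\neq\emptyset]$ and $b_y\le\mathbf 1[\sigma(H_n^{(y)})\cap I_n\neq\emptyset]$, and that these two indicators (not $b_x,b_y$ themselves, which are only $B_{3R_n}$-local) are independent because $B_n(x)\cap B_n(y)=\emptyset$; as written, ``multiplying yields'' could be misread as asserting independence of $b_x$ and $b_y$, which fails at distance $2R_n<d(x,y)\le 6R_n$.
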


\begin{proof}
We consider the subset $B_n(x,y) = B_n(x) \cup B_n(y)$ and we define the operator $H_n^{(x,y)}$ to be the restriction of $H_n$  to $\ell^2(B_n(x,y))$ with Neumann boundary conditions.

Let $\tilde I_n \subset  \R$ denote the interval constructed from $I_n$ by enlarging it by $3 \sqrt K \tau_n$ at both ends.
Assume that $b_x = b_y = 1$. Then we will show that the operator $H_n^{(x,y)}$ has almost surely more than one eigenvalue in $\tilde I_n$.

We argue by contradiction and assume that there is only one eigenvalue in $\tilde I_n$.
The assumption $b_x = b_y = 1$ implies that both $x$ and $y$ lie in $\mathcal F_n$. By definition of $\mathcal F_n$ there are eigenvalues $\xi^{(x)} \in \sigma(H_n^{(x)}) \cap I_n$ and  $\xi^{(y)} \in \sigma(H_n^{(y)}) \cap I_n$ and the corresponding eigenvectors satisfy 
\[
\sum_{u \in \partial B_n(x)} |\psi^{(x)}(u)|^2 \leq \tau_n^2 \qquad \textnormal{and} \qquad \sum_{u \in \partial B_n(y)} |\psi^{(y)}(u)|^2 \leq \tau_n^2 \, .
\]
We extend $\psi^{(x)}$ and $\psi^{(y)}$ by zero to $B_n(x,y)$ and use them as approximate eigenvectors for the operator $H_n^{(x,y)}$.
Applying Lemma~\ref{lem:approx} with $\G = B_n(x,y)$, $H = H_n^{(x,y)}$, and $\lambda = \xi^{(x)}$ and $\lambda = \xi^{(y)}$ respectively yields eigenvalues $\mu^{(x)}$ and $\mu^{(y)}$ of $H_n^{(x,y)}$ satisfying $|\mu^{(x)} - \xi^{(x)}| \leq \sqrt K \tau_n$ and  $|\mu^{(y)} - \xi^{(y)}| \leq \sqrt K \tau_n$. Since both $\xi^{(x)}$ and $\xi^{(y)}$ lie in $I_n$ we see that both $\mu^{(x)}$ and $\mu^{(y)}$ lie in $\tilde I_n$. By assumption, it follows that $\mu^{(x)} = \mu^{(y)}$ and that
\begin{equation}
\label{eq:evnear}
\left| \xi^{(x)} - \xi^{(y)} \right| \leq 2 \sqrt K \tau_n \, .
\end{equation}

Moreover, the assumption that $\tilde I_n$ contains only one eigenvalue of $H_n^{(x,y)}$ also implies that there is only one eigenvalue  in $(\xi^{(x)} - 2 \sqrt K \tau_n, \xi^{(x)} + 2 \sqrt K \tau_n)$ and that there is only one eigenvalue in $(\xi^{(y)} - 2 \sqrt K \tau_n, \xi^{(y)} + 2 \sqrt K \tau_n)$. Thus Lemma~\ref{lem:approx} also gives existence of   an eigenvector $\Phi$ of $H_n^{(x,y)}$ corresponding to $\mu^{(x)} = \mu^{(y)}$ such that 
\[
\left| \left<  \Phi, \psi^{(x)} \right> \right| \geq 1 -  K \tau_n^2 (2 \sqrt K \tau_n)^{-2} = \frac 34 \qquad \textnormal{and} \qquad \left| \left<  \Phi, \psi^{(y)} \right>  \right| \geq \frac 34 \, .
\]
Thus Lemma \ref{lem:overlap} yields $\left| \left<  \psi^{(y)}, \psi^{(x)} \right> \right| > 0$ and in particular 
\begin{equation}
\label{eq:posnear}
d(x,y) \leq 2R_n \, .
\end{equation}
The bounds \eqref{eq:evnear} and \eqref{eq:posnear} show that $y \in \mathcal C(x)$ and $b_x = b_y = 1$ implies $|\psi^{(x)}(x)| = |\psi^{(y)}(y)|$ which is false almost surely for $x \neq y$. We have reached a contradiction.

We have shown that $b_x b_y = 1$ implies that the operator $H_n^{(x,y)}$ has almost surely more than one eigenvalue in the interval $\tilde I_n$. Hence, the Minami estimate \eqref{minami} gives
\[
\E\left[ b_x b_y \right] = \p \left[ b_x b_y = 1 \right] \leq \p \left[ \left| \sigma(H_n^{(x,y)}) \cap \tilde I_n \right| \geq 2\right] \leq \frac 12 \| \rho \|_\infty^2 |\tilde I_n|^2 |B_n(x,y)|^2 \, .
\]
Inserting $| \tilde I_n| = (|I| + 6 \sqrt K \tau_n n)/n$ and $|B_n(x,y)| \leq 2 |B_n(x)| \leq 6 K^{R_n}$ completes the proof.
\end{proof}

\begin{proof}[Proof of Proposition \ref{pro:aux}]
In view of Lemma \ref{lem:independence} we can apply Lemma~\ref{lem:poisson} with $\G = \G_n$ and $\varrho =6 R_n$.   It remains to estimate $\E \left[ b_x \right]$ and $\E \left[ b_x b_y \right]$ for $x \neq y$.  

From Lemma~\ref{lem:locminami}, \eqref{volbound}, and $\tau_n \leq 1/6\sqrt K n$ we obtain, for any $x \in \G_n$, 
\begin{align*}
\sum_{y \in B_{6R_n}(x) \setminus \{ x \} }  \E \left[ b_x b_y \right]  & \leq 18 \| \rho \|_\infty^2 \lk |I| + 6 \sqrt K n \tau_n \rk^2 K^{2R_n}  n^{-2} |B_{6R_n}(x)| \\
& \leq  54 \| \rho \|_\infty^2 \lk |I| + 1 \rk^2 K^{8R_n}  n^{-2} \, .
\end{align*}
It follows that  
\begin{equation}
\label{expbbound1}
\sum_{x \in \G_n} \sum_{y \in B_{6R_n}(x) \setminus \{ x \} }   \E \left[ b_x b_y \right] \leq  54 \| \rho \|_\infty^2 \lk |I| + 1 \rk^2  K^{8R_n}  n^{-1} \, .
\end{equation}

To estimate $\E \left[ b_x \right]$ we note that $b_x = 1$ implies, in particular, $x \in \mathcal F_n$. So applying the Wegner estimate \eqref{wegner2} we obtain, for all $x \in \G_n$,
\[
\E \left[ b_x \right] = \p \left[b_x = 1 \right] \leq \p \left[ \left| \sigma \lk H_n^{(x)} \rk \cap I_n \right| \geq 1 \right] \leq \| \rho \|_\infty |I_n| |B_n(x)| \leq 3 \| \rho \|_\infty |I| K^{R_n} n^{-1} \, .
\]
Hence, it follows that 
\begin{equation}
\label{expbbound2}
\sum_{x \in \G_n} \sum_{y \in B_{6 R_n}} \E \left[ b_x \right] \E \left[ b_y \right] \leq  27 \| \rho \|_\infty^2 |I|^2 K^{8 R_n} n^{-1} \, .
\end{equation}
Thus the claim follows from Lemma~\ref{lem:poisson} and \eqref{expbbound1} and \eqref{expbbound2}.
\end{proof}

%%%%%%%%%%%%%%%%%%%%%%%%%%%%%%%%%%%%%%%%%555

\subsection{Proof of Proposition \ref{pro:compare}: Error estimates}
\label{ssec:compare} 

In this subsection we show that  $\eta_n^{(E)}$ is close to the eigenvalue process $\nu_n^{(E)}$. 
We consider an open interval $I_0 \subset \R$ such that the fractional moment bound \eqref{eq:fracmom} holds with exponent $\mu_s > 2 \ln K$. We fix $E \in I_0$ and assume that $n \in \N$ is large enough such that $I_n \subset I_0$. 
 
We choose $a, b \in \R$ such that $I = (a,b)$ and $I_n =  (E+a/n, E+b/n)$. Let now $E_j$, $j = 1, \dots, \nu_n^{(E)}(I)$, denote the eigenvalues of $H_n$ in $I_n$. As before, for an eigenvalue $E_j \in \sigma(H_n)$ we denote by $\phi_j$ the corresponding eigenvector. 

Let us write $\epsilon_n = 6 \sqrt K  \tau_n$ for short and consider the set  $I_n^{\e}$, a small set around the endpoints of the interval $I_n$:
\[
I_n^{\e} = (E+a/n-\epsilon_n,E+a/n+\epsilon_n) \cup  (E+b/n-\epsilon_n,E+b/n+\epsilon_n) \, .
\]
By $N(I_n^{\e})$  we denote the number of eigenvalues of $H_n$ in $I_n^{\e}$. The Wegner estimate \eqref{wegner2} implies
\begin{equation}
\label{shortwegner}
\p \left[ N(I_n^{\e}) \geq 1 \right] \leq \| \rho \|_\infty |I_n^{\e}| n \leq 4  \| \rho \|_\infty  \epsilon_n n \, .
\end{equation}
To apply Lemma~\ref{lem:approx} we also need the following estimate. By $\Delta(\epsilon_n)$ we denote the event that there are two distinct eigenvalues $E_j, E_k \in \sigma(H_n) \cap I_n$ satisfying $|E_j - E_k| \leq 2\epsilon_n$. By Lemma \ref{lem:double} we have
\begin{equation}
\label{eq:double}
\p \left[ \Delta(\epsilon_n) \right] \leq 4 \| \rho \|^2_\infty |I_n| \epsilon_n n^2 = 4 \|\rho\|^2_\infty |I| \epsilon_n n \, .
\end{equation}
Finally, we remark that the event $\{ |\psi^{(x)}(x)| = |\psi^{(y)}(y)| \}$ has probability zero for any two distinct vertices $x,y \in \mathcal F_n$. Thus the event
\[
O_n = \bigcup_{\begin{subarray}{c} (x,y) \in \mathcal F_n \times \mathcal F_n \\ x \neq y \end{subarray}} \left\{ |\psi^{(x)}(x)| = |\psi^{(y)}(y)| \right\}
\]
has probability zero. By $\Delta(\epsilon_n)^c$ and $O_n^c$ we denote the complement of the corresponding event.

\begin{lemma}
\label{lem:upper}
On the event $\Omega_n = \{ N(I_n^{\e}) = 0 \} \cap \Delta(\epsilon_n)^c \cap O_n^c$ we have $\eta_n^{(E)}(I) \leq \nu_n^{(E)}(I)$.
\end{lemma}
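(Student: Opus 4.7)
The plan is to establish an injective map $x \mapsto E_{j(x)}$ from $\mathcal E_n$ into the set of eigenvalues of $H_n$ in $I_n$. Once such a map is constructed, the inequality $\eta_n^{(E)}(I) = |\mathcal E_n| \leq |\sigma(H_n) \cap I_n| = \nu_n^{(E)}(I)$ is immediate.

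First, I would associate to each $x \in \mathcal E_n \subset \mathcal F_n$ an eigenvalue $E_{j(x)} \in \sigma(H_n)$ via the approximate-eigenvector bound \eqref{error}, which gives $|E_{j(x)} - \xi^{(x)}| \leq \sqrt K \tau_n$. Since $\xi^{(x)} \in I_n$ and $\sqrt K \tau_n < \epsilon_n$, the eigenvalue $E_{j(x)}$ lies in $I_n \cup I_n^{\e}$, and the event $\{N(I_n^{\e}) = 0\}$ forces $E_{j(x)} \in I_n$. This uses only the first of the three events defining $\Omega_n$.

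Next I would show injectivity of $x \mapsto E_{j(x)}$. Suppose, for contradiction, that $x,y \in \mathcal E_n$ are distinct but $E_{j(x)} = E_{j(y)} =: E^\ast$. I would first establish that the interval $(\xi^{(x)} - 2\sqrt K \tau_n, \xi^{(x)} + 2\sqrt K\tau_n)$ contains only $E^\ast$ from $\sigma(H_n)$: any other eigenvalue $E_k$ in this window would satisfy $|E_k - E^\ast| \leq 3\sqrt K \tau_n < \epsilon_n$, so either $E_k \in I_n$ — contradicting $\Delta(\epsilon_n)^c$ (since $3\sqrt K\tau_n < 2\epsilon_n$) — or $E_k$ lies within $\epsilon_n$ of an endpoint of $I_n$, contradicting $\{N(I_n^{\e})=0\}$. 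Thus Lemma~\ref{lem:approx} applied with $\epsilon = 2\sqrt K \tau_n$ yields $|\langle \psi^{(x)}, \phi_{j(x)}\rangle|^2 \geq 1 - K\tau_n^2(2\sqrt K \tau_n)^{-2} = 3/4$, and the same bound for $\psi^{(y)}$ with $\phi_{j(y)} = \phi_{j(x)}$.

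Lemma~\ref{lem:overlap} then gives $|\langle \psi^{(x)}, \psi^{(y)} \rangle| \geq 1/2 > 0$, so the supports $B_n(x)$ and $B_n(y)$ must intersect, forcing $d(x,y) \leq 2R_n$. Combined with $|\xi^{(x)} - \xi^{(y)}| \leq 2\sqrt K \tau_n$, this places $y \in \mathcal C(x)$ and symmetrically $x \in \mathcal C(y)$. The defining property of $\mathcal E_n$ then requires simultaneously $|\psi^{(x)}(x)| \geq |\psi^{(y)}(y)|$ and $|\psi^{(y)}(y)| \geq |\psi^{(x)}(x)|$, hence equality, which is ruled out on $O_n^c$. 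The main technical obstacle is the first step of this last argument: correctly using $\{N(I_n^{\e})=0\}$ together with $\Delta(\epsilon_n)^c$ to ensure the single-eigenvalue hypothesis of Lemma~\ref{lem:approx} holds near $\xi^{(x)}$, even when $\xi^{(x)}$ sits close to an endpoint of $I_n$ — this is precisely where the slightly larger enlargement $\epsilon_n = 6\sqrt K \tau_n$ in the definition of $I_n^{\e}$ pays off.
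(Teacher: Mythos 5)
Your proposal is correct and follows essentially the same route as the paper's proof: associate to each vertex in $\mathcal E_n$ an eigenvalue of $H_n$ via \eqref{error}, use $\{N(I_n^{\e})=0\}$ to place it in $I_n$, then use $\Delta(\epsilon_n)^c$ together with $\{N(I_n^{\e})=0\}$ to verify the single-eigenvalue hypothesis of Lemma~\ref{lem:approx}, apply Lemma~\ref{lem:overlap} to deduce $d(x,y)\leq 2R_n$ and cluster membership, and rule out collisions on $O_n^c$. The only cosmetic difference is your choice $\epsilon = 2\sqrt K\tau_n$ rather than the paper's $\epsilon = \epsilon_n$; both give a strictly positive overlap and the argument goes through identically.
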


\begin{proof}
Recall the definition of $\eta_n$ as the number of vertices in $\mathcal E_n$, see \eqref{eq:auxdef}. We write $\mathcal E_n = \{ e_i \}_{i = 1}^{\eta_n}$. By definition, each vertex $e_i$ lies in $\mathcal F_n$. Thus there is $\xi^{(e_i)} \in \sigma(H_n^{(e_i)}) \cap I_n$ and $E_{j_i} \in \sigma(H_n)$ such that \eqref{error} holds for $E_{j_i}$ and $\xi^{(e_i)}$. On the event $\Omega_n$ we have $N(I_n^{\e}) = 0$ and it follows that $E_{j_i} \in I_n$. Hence, with each vertex $e_i$, $i = 1, \dots, \eta_n$, we can associate an eigenvalue $E_{j_i} \in \sigma(H_n) \cap I_n$.

Now assume that one eigenvalue $E_j \in \sigma(H_n) \cap I_n$ is associated with two vertices $e_i, e_k \in \mathcal E_n$ such that $j_i = j_k$ and   $E_{j_i} = E_{j_k}$. Then, by \eqref{error}, we have
\begin{equation}
\label{eigenvalueclose}
\left| \xi^{(e_i)} - \xi^{(e_k)} \right| \leq \left| \xi^{(e_i)} - E_{j_i} \right|  + \left| E_{j_k} - \xi^{(e_k)} \right|  \leq 2 \sqrt K \tau_n \, .
\end{equation}
Recall from Section~\ref{ssec:construction} that the existence of $E_{j_i}$ and $E_{j_k}$ is implied by Lemma~\ref{lem:approx} with $H = H_n$ and $\lambda = \xi^{(e_i)}$ and $\lambda = \xi^{(e_k)}$ respectively.
On $\Omega_n$ the event $\Delta(\epsilon_n)$ does not happen so that the eigenvalues of $H_n$ in $I_n$ have distance at least $2 \epsilon_n$ from each other. We also have $N(I_n^\e) = 0$ and thus all conditions of Lemma \ref{lem:approx} are satisfied. Hence, the eigenvectors $\psi^{(e_i)}$ and $\psi^{(e_k)}$ corresponding to $\xi^{(e_i)}$ and $\xi^{(e_k)}$ respectively, satisfy  
\[
\left| \left< \psi^{(e_i)}, \phi_j \right>  \right|^2 \geq 1- K \tau_n^2 \epsilon_n^{-2} \quad \textnormal{and} \quad \left| \left< \psi^{(e_k)}, \phi_j \right> \right|^2 \geq 1- K \tau_n^2 \epsilon_n^{-2} \, ,
\]
where $\phi_j$ denotes the eigenvector of $H_n$ corresponding to $E_{j_i} = E_{j_k}$. Thus Lemma \ref{lem:overlap} and the identity $\epsilon_n = 6 \sqrt K \tau_n$  imply
\[
\left| \left< \psi^{(e_i)}, \psi^{(e_k)} \right> \right| \geq 1- 2 K \tau_n^2 \epsilon_n^{-2} > 0
\]
and
\begin{equation}
\label{eq:centerclose}
d(e_i,e_k) \leq 2 R_n \, .
\end{equation}
The relations \eqref{eigenvalueclose} and \eqref{eq:centerclose} show that the vertex $e_i$ belongs to the cluster $\mathcal C(e_k)$. Both vertices $e_i$ and $e_k$ lie in $\mathcal E_n$, so it follows that $|\psi^{(e_i)}(e_i)| = |\psi^{(e_k)}(e_k)|$ and thus, on $O_n^c$, $e_i = e_k$. Hence, for each vertex  $e_i$, $i = 1, \dots, \eta_n$, there is a distinct eigenvalue $E_{j_i} \in \sigma(H_n) \cap I_n$ and the proof is complete. 
\end{proof}

In order to prove the lower bound $\eta_n^{(E)}(I) \geq \nu_n^{(E)}(I)$ we introduce the following local events. 
Similar as above, we denote by $\Delta^{(x)}(\epsilon_n)$ the event that there are two distinct eigenvalues
\[
\xi_j^{(x)}, \xi_k^{(x)} \in \sigma \lk H_n^{(x)} \rk \cap I_n
\]
satisfying $|\xi_j^{(x)}-\xi_k^{(x)}| \leq 2 \epsilon_n$. Then Lemma \ref{lem:double} yields, for all $x \in \G_n$,
\begin{equation}
\label{eq:localdouble}
\p \left[ \Delta^{(x)}(\epsilon_n) \right] \leq 4 \| \rho \|_\infty^2 |I_n| \epsilon_n |B_n(x)|^2 \leq 36 \| \rho \|_\infty^2 |I| K^{2R_n } \epsilon_n n^{-1} \, .
\end{equation}

We also need the following estimate, a variant of Lemma~\ref{lem:approx}. In the remainder of this section we choose $\ln K < \mu < \mu_s - \ln K$ as in Proposition~\ref{pro:locas} and write $\mu_K = \mu - (\ln K)/2$ for short. Moreover, for $\phi \in \ell^2(\G_n)$ and $\hat x \in \G_n$ we write $\left. \phi \right|_{B_n(\hat x)}$ for the restricted vector that is set to be zero outside of $B_n(\hat x)$.

\begin{lemma}
\label{lem:revapprox}
Let $\phi$ be a $\ell^2(\G_n)$-normalized eigenvector of $H_n$ with corresponding eigenvalue $E$. Assume that there is a constant 
\begin{equation}
\label{vectorconst}
0 < C_n \leq \sqrt{ \frac {e^{2 \mu_K}-1}3 }  e^{\mu_K R_n}
\end{equation}
and  a vertex $\hat x \in \G_n$ such that, for all $x \in \G_n$,
\begin{equation}
\label{revbound}
|\phi(x)| \leq C_n e^{-\mu d(x,\hat x)} \, .
\end{equation}
Then there exists an eigenvalue $\xi \in \sigma ( H_n^{(\hat x)} )$ satisfying
\[
|E - \xi | \leq \sqrt{3 K} C_n   e^{- \mu_K (R_n+1)} 
\]
and the restricted vector $\left. \phi \right|_{B_n(\hat x)}$ satisfies
\begin{equation}
\label{renormalized}
\| \left. \phi \right|_{B_n(\hat x)} \|^2 \geq 1 -  \frac 32 C_n^2 \frac{e^{-2 \mu_K R_n}}{e^{2\mu_k}-1} \, .
\end{equation}

Assume, in addition, that the event $\Delta^{(\hat x)}(\epsilon_n)$ does not happen. Then the $\ell^2(B_n(\hat x))$-normalized eigenvector $\psi$ of $H_n^{(\hat x)}$ corresponding to $\xi$ satisfies
\[
\left| \left<  \psi, \tilde \phi \right> \right|^2 \geq 1 - 3 K C_n^2 e^{-2\mu_K (R_n+1)} \epsilon_n^{-2} \, ,
\]
where $\tilde \phi$ denotes the normalized vector $\tilde \phi = \| \left. \phi \right|_{B_n(\hat x)} \|^{-1} \left. \phi \right|_{B_n(\hat x)}$.
\end{lemma}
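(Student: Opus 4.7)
The plan is to treat $\left.\phi\right|_{B_n(\hat x)}$ as an approximate eigenvector of $H_n^{(\hat x)}$ and run an argument parallel to Lemma~\ref{lem:approx}, but in reverse. First I estimate the tail $\sum_{x\notin B_n(\hat x)}|\phi(x)|^2$ by splitting according to $r=d(x,\hat x)$, inserting the pointwise bound \eqref{revbound} and the surface estimate \eqref{surfbound}, and summing the resulting geometric series with common ratio $Ke^{-2\mu}=e^{-2\mu_K}$. This produces $\sum_{x\notin B_n(\hat x)}|\phi(x)|^2\le \tfrac{3}{2}C_n^2\,e^{-2\mu_K R_n}/(e^{2\mu_K}-1)$, which is exactly the complement of \eqref{renormalized}. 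The hypothesis \eqref{vectorconst} on $C_n$ is calibrated so that this tail is at most $1/2$, hence $\|\left.\phi\right|_{B_n(\hat x)}\|^2\ge 1/2$ and $\tilde\phi$ is well-defined.

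Next I compute $(H_n^{(\hat x)}-E)\left.\phi\right|_{B_n(\hat x)}$ pointwise. Because $\phi$ is an eigenvector of $H_n$ and $H_n^{(\hat x)}$ differs from $H_n$ only in that it cuts the hopping across $\partial B_n(\hat x)$, this difference vanishes on $B_n(\hat x)\setminus\partial B_n(\hat x)$ and on $\G_n\setminus B_n(\hat x)$, while at $y\in\partial B_n(\hat x)$ it equals $\sum_{u:\,d(u,y)=1,\,u\notin B_n(\hat x)}\phi(u)$. Squaring, applying Cauchy--Schwarz against the at most $K$ external neighbors of each boundary vertex, bounding $|\phi(u)|\le C_n e^{-\mu(R_n+1)}$ on the outer shell, and invoking $|\partial B_n(\hat x)|\le\tfrac{3}{2}K^{R_n}$ gives
\[
\|(H_n^{(\hat x)}-E)\left.\phi\right|_{B_n(\hat x)}\|^2 \le \tfrac{3}{2}K^{R_n+2}C_n^2 e^{-2\mu(R_n+1)} = \tfrac{3K}{2} C_n^2 e^{-2\mu_K(R_n+1)}.
\]
Dividing by $\|\left.\phi\right|_{B_n(\hat x)}\|^2\ge 1/2$ transfers this bound to $\tilde\phi$ with constant $3K$.

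Now I expand $\tilde\phi$ in the orthonormal eigenbasis $\{\psi_j\}$ of $H_n^{(\hat x)}$ on $\ell^2(B_n(\hat x))$:
\[
\|(H_n^{(\hat x)}-E)\tilde\phi\|^2 = \sum_j|\langle\tilde\phi,\psi_j\rangle|^2(\xi_j-E)^2 \ge \min_j(\xi_j-E)^2,
\]
so taking $\xi$ to be the minimizer and extracting a square root delivers the first conclusion. For the overlap, the just-obtained bound $|E-\xi|\le\sqrt{3K}\,C_n e^{-\mu_K(R_n+1)}$ together with the parameter regime in which this lemma is invoked places $\xi$ inside the window $(E-\epsilon_n,E+\epsilon_n)$, and $(\Delta^{(\hat x)}(\epsilon_n))^c$ then forces $\xi$ to be the \emph{unique} eigenvalue of $H_n^{(\hat x)}$ in that window (any other nearby eigenvalue would, together with $\xi$, constitute a pair in $I_n$ separated by at most $2\epsilon_n$). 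Splitting the spectral identity above into the $\psi$-term and the rest, and using $(\xi_j-E)^2\ge\epsilon_n^2$ on the rest, yields $\sum_{j\neq \psi}|\langle\tilde\phi,\psi_j\rangle|^2\le 3KC_n^2 e^{-2\mu_K(R_n+1)}\epsilon_n^{-2}$, and Parseval closes the argument.

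The main obstacle is the bookkeeping in the second step: to end with a single factor of $K$ inside the square root (rather than $K^2$) one must exploit that every $y\in\partial B_n(\hat x)$ has at least one neighbor back toward $\hat x$ inside $B_n(\hat x)$, so at most $K$ of its neighbors lie outside, combined with the surface bound \eqref{surfbound}. A smaller but necessary check is that \eqref{vectorconst} is tight enough simultaneously to make $\tilde\phi$ well-defined and, in the application context, to guarantee $\sqrt{3K}\,C_n e^{-\mu_K(R_n+1)}\le\epsilon_n$, so that the uniqueness argument for $\xi$ within the window actually fires.
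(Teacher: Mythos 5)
Your proposal is correct and follows essentially the same route as the paper's proof: estimate the tail $\sum_{x\notin B_n(\hat x)}|\phi(x)|^2$ with \eqref{revbound} and \eqref{surfbound} to justify normalization, bound $\|(H_n^{(\hat x)}-E)\phi|_{B_n(\hat x)}\|^2$ via the boundary coupling, and then rerun the mechanism of Lemma~\ref{lem:approx}. The only notable difference is that the paper terminates its argument with ``the remainder follows the same arguments as the proof of Lemma~\ref{lem:approx},'' whereas you spell out the overlap bound and explicitly flag the compatibility condition $\sqrt{3K}\,C_n e^{-\mu_K(R_n+1)}\le\epsilon_n$ needed for the uniqueness step under $\Delta^{(\hat x)}(\epsilon_n)^c$ --- a detail the paper leaves implicit but which indeed holds under the stronger hypothesis \eqref{secondvectorconst} in the application.
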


\begin{proof}
We have to show that $\tilde \phi$ is an approximate eigenvector for the operator $H_n^{(\hat x)}$ with domain $\ell^2(B_n(\hat x))$. The operators $H_n$ and $H_n^{(\hat x)}$ coincide in the interior of $B_n(\hat x)$ such that 
\[
\left\| \lk H_n^{(\hat x)}  - E \rk \phi \right\|_{\ell^2(B_n(\hat x))}^2 = \sum_{x \in \partial B_n(\hat x)} \left| \sum_{y \notin B_n(\hat x) : d(y,x) = 1} \phi(y) \right|^2 \, .
\]
Hence, from the Schwarz inequality and from  \eqref{revbound} and \eqref{surfbound} we obtain
\begin{equation}
\label{notnormalizedbound}
\left\| \lk H_n^{(\hat x)}  - E \rk \phi \right\|_{\ell^2(B_n(\hat x))}^2 \leq K \sum_{\partial B_{R_n+1}(\hat x)} \left| \phi(y) \right|^2  \leq \frac 32 K C_n^2 e^{-2 \mu_K (R_n+1)} \, .
\end{equation}

To estimate the effect of the normalization of $\tilde \phi$ we use \eqref{revbound} and \eqref{surfbound} again to get 
\begin{align*}
\sum_{y \notin B_n(\hat x)} |\phi(y)|^2  &= \sum_{m > R_n} \sum_{d(y, \hat x) = m} | \phi(y)|^2 \leq \frac 32 \sum_{m > R_n}   K^{m} C_n^2 e^{-2\mu m} = \frac 32 C_n^2 \frac{e^{-2 \mu_K R_n}}{e^{2\mu_k}-1} \, .
\end{align*}
Since $\phi$ is normalized this implies \eqref{renormalized}.
Inserting this into  \eqref{notnormalizedbound} and applying \eqref{vectorconst} to simplify yields
\[
\left\| \lk H_n^{(\hat x)}  - E \rk \tilde \phi \right\|_{\ell^2(B_n(\hat x))}^2 \leq \frac{3 K C_n^2 e^{-2 \mu_K (R_n+1)}}{2 - 3C_n^2 (e^{2\mu_K}-1)^{-1} e^{-2\mu_K R_n}} \leq 3K C_n^2 e^{-2 \mu_K (R_n+1)} \, .
\]
The remainder of the proof follows the same arguments as the proof of Lemma \ref{lem:approx}.
\end{proof}

Now we proceed to give the proof of the lower bound.

\begin{lemma}
\label{lem:goodset}
Let the constant $C_n > 0$ satisfy 
\begin{equation}
\label{secondvectorconst}
C_n \leq \tau_n^2  e^{(\mu - \ln K)R_n} \, .
\end{equation}
Define the event
\[
\Omega_n' = \{ N(I_n^{\e}) = 0 \} \cap \Delta(\epsilon_n)^c \cap \left\{ X_n \leq C_n \right\} \cap \bigcap_{x \in \G_n}    \Delta^{(x)}(\epsilon_n)^c \cap O_n^c 
\]
with $X_n$ from Proposition~\ref{pro:locas}.
Then, for $n$ large enough, on $\Omega_n'$ we have 
\[
\eta_n^{(E)}(I) =  \nu_n^{(E)}(I) \, .
\]
\end{lemma}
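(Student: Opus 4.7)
The task is to establish $\eta_n^{(E)}(I) \geq \nu_n^{(E)}(I)$ on $\Omega_n'$, since Lemma~\ref{lem:upper} already gives the reverse inequality on the larger event $\Omega_n$. The plan is to construct an injection from $\sigma(H_n) \cap I_n$ into $\mathcal{E}_n$. For each $E_j\in\sigma(H_n)\cap I_n$, Proposition~\ref{pro:locas} supplies a localization center $x_j$, and on $\{X_n\leq C_n\}$ the eigenvector obeys $|\phi_j(x)|\leq C_n e^{-\mu d(x,x_j)}$. The choice \eqref{secondvectorconst} satisfies the hypothesis \eqref{vectorconst} of Lemma~\ref{lem:revapprox} for $n$ large (it reduces to $\tau_n^2\leq \mathrm{const}\cdot K^{R_n/2}$), so applying that lemma at $\hat x=x_j$ produces $\tilde\xi\in\sigma(H_n^{(x_j)})$ with $|E_j-\tilde\xi|\leq \sqrt{3K}\,C_n e^{-\mu_K(R_n+1)}$, which is much smaller than $\epsilon_n$. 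Combined with $\{N(I_n^{\e})=0\}$, which keeps $E_j$ at distance $\geq \epsilon_n$ from $\partial I_n$, this places $\tilde\xi\in I_n$, verifying condition~(i) of the definition of $\mathcal{F}_n$. For condition~(ii), the second half of Lemma~\ref{lem:revapprox} (valid on $\Delta^{(x_j)}(\epsilon_n)^c$) provides the corresponding eigenvector $\tilde\psi$ of $H_n^{(x_j)}$ with $|\langle\tilde\psi,\tilde\phi_j\rangle|^2\geq 1-\delta^2$, where $\tilde\phi_j=\phi_j|_{B_n(x_j)}/\|\phi_j|_{B_n(x_j)}\|$ and $\delta^2=3KC_n^2 e^{-2\mu_K(R_n+1)}\epsilon_n^{-2}$. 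Expanding $\tilde\psi$ in an orthonormal basis of $\ell^2(B_n(x_j))$ whose first element is $\tilde\phi_j$ (the same technique as in Lemma~\ref{lem:overlapbound}) yields $|\tilde\psi(y)|\leq |\tilde\phi_j(y)|+\delta$. Summing $|\tilde\psi(y)|^2\leq 2|\tilde\phi_j(y)|^2+2\delta^2$ over $y\in\partial B_n(x_j)$, the first piece is controlled via the decay of $\phi_j$ and \eqref{surfbound} by a constant times $C_n^2 e^{-2\mu_K R_n}$ (using $K^{R_n}e^{-2\mu R_n}=e^{-2\mu_K R_n}$), while $2\delta^2|\partial B_n(x_j)|$ is a constant times $C_n^2 e^{-2(\mu-\ln K)(R_n+1)}\epsilon_n^{-2}$. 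The choice \eqref{secondvectorconst} with $\epsilon_n=6\sqrt K\tau_n$ makes both at most $\tau_n^2/2$ for large $n$, so $x_j\in\mathcal{F}_n$ with $\xi^{(x_j)}=\tilde\xi$ and $\psi^{(x_j)}=\tilde\psi$.

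Next, I partition $\mathcal{F}_n$ by eigenvalues of $H_n$ and pick cluster maxima. For any $y\in\mathcal{F}_n$, Lemma~\ref{lem:approx} yields an eigenvalue of $H_n$ within $\sqrt K\tau_n$ of $\xi^{(y)}$; by $\{N(I_n^{\e})=0\}$ it lies in $I_n$, and on $\Delta(\epsilon_n)^c$ it is uniquely determined. Call it $E_{j(y)}$. For $z\in\mathcal{C}(y)$, the triangle inequality through $\xi^{(z)},\xi^{(y)}$ gives $|E_{j(z)}-E_{j(y)}|\leq 4\sqrt K\tau_n<\epsilon_n$, so $E_{j(z)}=E_{j(y)}$ on $\Delta(\epsilon_n)^c$. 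Hence $\mathcal{F}_n=\bigsqcup_E\mathcal{F}_E$ over $E\in\sigma(H_n)\cap I_n$ with $\mathcal{F}_E=\{y\in\mathcal{F}_n:E_{j(y)}=E\}$; each cluster $\mathcal{C}(y)$ sits in a single $\mathcal{F}_E$, and by the construction of the previous paragraph every $\mathcal{F}_{E_j}$ is nonempty (since $x_j\in\mathcal{F}_{E_j}$). For each $E\in\sigma(H_n)\cap I_n$ let $e_E$ be the unique (on $O_n^c$) maximizer of $|\psi^{(y)}(y)|$ over $\mathcal{F}_E$. Since $\mathcal{C}(e_E)\subset\mathcal{F}_E$, $e_E$ is also maximal on $\mathcal{C}(e_E)$, so $e_E\in\mathcal{E}_n$. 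Disjointness of the $\mathcal{F}_E$ makes $E\mapsto e_E$ injective, giving $|\mathcal{E}_n|\geq|\sigma(H_n)\cap I_n|$, and combining with Lemma~\ref{lem:upper} completes the proof.

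The main obstacle is the boundary estimate in the first paragraph: the constant $C_n$ in \eqref{secondvectorconst} has to be small enough to keep both $|E_j-\tilde\xi|$ well below $\epsilon_n$ and the boundary mass of $\tilde\psi$ below $\tau_n^2$, yet large enough that the moment bound \eqref{eq:exploc} makes $\{X_n\leq C_n\}$ typical (so that the $C_n^{-2(a-1)/a}$ term in Proposition~\ref{pro:compare} is controlled). The positive exponent $\mu-\ln K$ in \eqref{secondvectorconst} is exactly the surplus between eigenvector decay rate $\mu$ and volume growth rate $\ln K$, and it is this gap that closes the boundary estimate.
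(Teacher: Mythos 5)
Your proof is correct and follows essentially the same strategy as the paper, with one notable streamlining in the combinatorial bookkeeping. Both proofs reduce to the lower bound via Lemma~\ref{lem:upper}, both use Proposition~\ref{pro:locas} on $\{X_n\leq C_n\}$ to obtain the pointwise bound $|\phi_j(x)|\leq C_n e^{-\mu d(x,x_j)}$, both push this through Lemma~\ref{lem:revapprox} to verify that the localization center $x_j$ lies in $\mathcal F_n$, and your boundary estimate for condition~(ii) matches the paper's computation up to constants. The place where you diverge is the construction of the injection $j\mapsto$ vertex of $\mathcal E_n$: the paper sets $e_{i_j}=\argmax_{y\in\mathcal C(x_j)}|\psi^{(y)}(y)|$ and then verifies $\mathcal C(e_{i_j})\subset\mathcal C(x_j)$, which requires showing not only the eigenvalue proximity $|\xi^{(u)}-\xi^{(x_j)}|\leq 2\sqrt K\tau_n$ but also the \emph{geometric} containment $d(u,x_j)\leq 2R_n$, the latter needing another pass through the overlap argument of Lemma~\ref{lem:approx} and Lemma~\ref{lem:overlap}. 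You instead partition $\mathcal F_n$ into classes $\mathcal F_E$ indexed by the associated eigenvalue of $H_n$, observe that each cluster $\mathcal C(y)$ is contained in a single class (this needs only the triangle inequality on eigenvalues and $\Delta(\epsilon_n)^c$, with no distance bound), and take the maximizer over each nonempty $\mathcal F_E$. This sidesteps the geometric step entirely, and injectivity becomes automatic from the disjointness of the $\mathcal F_E$ rather than requiring the paper's further triangle-inequality argument. The saving is modest — the geometric bound the paper proves is not hard — but your version does make the counting transparently clean, and it is worth noting that only the spectral separation, not the spatial one, is actually used in that step.
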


\begin{proof}
The upper bound follows directly from Lemma~\ref{lem:upper} and the fact that $\Omega_n' \subset \Omega_n$. Let us proceed to prove the lower bound $\eta_n \geq \nu_n$.

Here we use the assumption that within $I_0$ the fractional moment bound \eqref{eq:fracmom} is satisfied: For $n$ large enough we have $I_n \subset I_0$ and Proposition~\ref{pro:locas} yields that for each eigenvalue $E_j \in \sigma(H_n) \cap I_n$, the corresponding eigenvector $\phi_j$ satisfies \eqref{eq:loccond}. In addition, on $\Omega'_n$ we have $X_n(\omega) \leq C_n$. Thus there exists a vertex $x_j \in \G_n$ such that, for all $x \in \G_n$, 
\begin{equation}
\label{eq:evgoodbound}
|\phi_j(x)| \leq C_n e^{- \mu d(x,x_j)} \, .
\end{equation}
First, we need to show that $x_j \in \mathcal F_n$, so  we have to verify conditions (i) and (ii) from Section~\ref{ssec:construction}.

Note that \eqref{secondvectorconst} combined with $\tau_n \leq 1/6\sqrt K n$ implies \eqref{vectorconst}. Hence, Lemma \ref{lem:revapprox} applied to $\phi_j$ yields existence of an eigenvalue $\xi^{(x_j)} \in \sigma ( H_n^{(x_j)})$ satisfying $\left| E_j - \xi^{(x_j)} \right| \leq \sqrt{3 K} C_n e^{-\mu_K ( R_n+1)}$. From \eqref{secondvectorconst} and the fact that $\tau_n \leq 1/6\sqrt K n$ we conclude 
\[
\left| E_j - \xi^{(x_j)} \right| \leq \sqrt K \tau_n \, .
\]
On $\Omega'_n$, we have $N(I_n^{\e})=0$ and it follows that $\xi^{(x_j)} \in I_n$ and that $x_j$ satisfies (i).

To verify condition (ii) let $\psi^{(x_j)}$ be the eigenvector corresponding to $\xi^{(x_j)}$ and let $\tilde \phi_j$ denote the normalized restricted eigenvector corresponding to $E_j$: 
\[
\tilde \phi_j = \left\| \left. \phi_j \right|_{B_n(x_j)}\right\|^{-1} \left. \phi_j \right|_{B_n(x_j)} \, .
\]
The event $\Delta^{(x_j)}(\epsilon_n)$ does not occur on $\Omega_n'$, thus Lemma~\ref{lem:revapprox} implies 
\[
\left| \left<  \tilde \phi_j, \psi^{(x_j)} \right> \right|^2 \geq 1 - 3 K C_n^2   e^{-2\mu_K (R_n+1)} \epsilon_n^{-2} \, .
\]
Hence, by Lemma \ref{lem:overlapbound} and \eqref{surfbound}, we find
\[
\sum_{y \in \partial B_n(x_j)} \left | \psi^{(x_j)} (y) \right|^2 \leq 2 \lk \sum_{y \in \partial B_n(x_j)} \left | \tilde \phi_j (y) \right|^2 + \frac 92 K^{R_n+1}  C_n^2  e^{-2 \mu_K ( R_n+1)} \epsilon_n^{-2} \rk \, .
\]
By  \eqref{surfbound}, \eqref{renormalized}, \eqref{vectorconst},  and \eqref{eq:evgoodbound}, the first summand is bounded by $3 C_n^2 e^{-2 \mu_K R_n}$. Inserting the identities $\epsilon_n = 6 \sqrt K \tau_n$ and $\mu_K = \mu - (\ln K)/2$ and using the fact that $\tau_n \leq 1/6 \sqrt K n$ we obtain
\begin{align*}
\sum_{y \in \partial B_n(x_j)} \left | \psi^{(x_j)} (y) \right|^2 &\leq \frac 14 C_n^2 K^{2R_n+1}  e^{-2\mu (R_n+1)} \tau_n^{-2} \lk \frac{24 e^{2 \mu}\tau_n^2}{K^{R_n+1}} +1 \rk\\
&\leq  C_n^2 e^{-2(\mu-\ln K)R_n} \tau_n^{-2} \, .
\end{align*}
By \eqref{secondvectorconst} we see that the right-hand side is bounded by $\tau_n^2$ so that $x_j$ satisfies condition (ii). Hence, $x_j \in \mathcal F_n$.

To associate with $E_j$ a vertex from $\mathcal E_n$ write $\mathcal E_n = \{ e_i \}_{i=1}^{\eta_n}$ and consider
\[
e_{i_j} = \argmax_{y \in \mathcal C(x_j)} \left| \psi^{(y)}(y) \right| \, .
\]
We will now show that 
\begin{equation}
\label{eine}
e_{i_j} \in \mathcal E_n \, .
\end{equation}
Let $u \in \mathcal C(e_{i_j})$. Then, since $e_{i_j} \in \mathcal C(x_j)$,
\[
\left| \xi^{(u)} - \xi^{(x_j)} \right| \leq \left|  \xi^{(u)} - \xi^{(e_{i_j})} \right|  + \left| \xi^{(e_{i_j})} - \xi^{(x_j)} \right| \leq 4 \sqrt K \tau_n \, .
\]
By \eqref{error}, there is an eigenvalue $E^{(u)} \in \sigma(H_n)$ such that $\left| E^{(u)} - \xi^{(u)} \right| \leq  \sqrt K \tau_n$. Combining this with $6 \sqrt K \tau_n = \epsilon_n$ we get
\[
\left| E^{(u)} - E_j \right| \leq \left| E^{(u)} - \xi^{(u)} \right| +\left| \xi^{(u)} - \xi^{(x_j)} \right| + \left| \xi^{(x_j)} - E_j \right|  \leq 6 \sqrt K \tau_n = \epsilon_n \, .
\] 
On $\Omega_n'$ the event $\Delta(\epsilon_n)$ does not happen, so we conclude that $E^{(u)} = E_j$. In particular, we obtain
\[
\left| \xi^{(u)} - \xi^{(x_j)} \right| \leq \left| \xi^{(u)} - E^{(u)} \right| + \left| E_j  - \xi^{(x_j)} \right| \leq 2 \sqrt K \tau_n \, .
\]
Moreover, we can apply Lemma~\ref{lem:approx} using the same argument that lead to \eqref{eq:centerclose} to show $d(u,x_j) \leq 2 R_n$ and we find $u \in \mathcal C(x_j)$. This shows that $\mathcal C(e_{i_j}) \subset \mathcal C(x_j)$ so that
\[
e_{i_j} = \argmax_{y \in \mathcal C(e_{i_j})} \left| \psi^{(y)}(y) \right| \, .
\]
On $O_n^c$, this proves \eqref{eine}.
So with each eigenvalue $E_j \in \sigma(H_n) \cap I_n$, $j = 1, \dots, \nu_n$, we can associate a vertex $e_{i_j} \in \mathcal E_n$.

Assume now that $e_{i_j} = e_{j_k}$ for $j,k \in \{1, \dots, \nu_n\}$. In particular, we get $\xi^{(e_{i_j})} = \xi^{(e_{i_k})}$ and, since $e_{i_j} \in \mathcal C(x_j)$ and $e_{i_k} \in \mathcal C(x_k)$,  
\[
\left| E_j - E_k \right| \leq \left| E_j - \xi^{(x_j)} \right| + \left| \xi^{(x_j)} - \xi^{(e_{i_j})}  \right| +  \left| \xi^{(x_k)} - \xi^{(e_{i_k})}  \right| +  \left| \xi^{(x_k)} - E_k \right|  \leq 6 \sqrt K \tau_n = \epsilon_n \, .
\]
On $\Omega_n'$, it follows that $E_j = E_k$. So with each eigenvalue $E_j$, $j = 1, \dots, \nu_n$, we can associate a distinct vertex $e_{i_j}$. This proves the lower bound $\eta_n \geq \nu_n$ and completes the proof.
\end{proof}

With Lemma~\ref{lem:goodset} at hand we can complete the proof of Proposition~\ref{pro:compare}.

\begin{proof}[Proof of Proposition \ref{pro:compare}]
We note the obvious bounds $\nu_n^{(E)}(I) \leq n$ and $\eta_n^{(E)}(I) \leq n$. Thus Lemma~\ref{lem:goodset} shows
\[
\E \left[ \left| \nu_n^{(E)}(I)  - \eta_n^{(E)}(I) \right| \right] \leq n \p \left[{ \Omega'_n}^c \right] 
\]  
and
\[
\E \left[ \left| e^{-t \nu_n^{(E)}(I)} - e^{-t  \eta_n^{(E)}(I) } \right| \right] \leq  \p \left[{ \Omega'_n}^c \right] 
\]  
for all $t \geq 0$.
Combining \eqref{shortwegner}, \eqref{eq:double}, \eqref{eq:exploc}, and \eqref{eq:localdouble} yields
\[
\p \left[{ \Omega'_n}^c \right] \leq 4 \| \rho \|_\infty \epsilon_n n + 4 \| \rho \|_\infty^2 |I| \epsilon_n n + \Cl n |I_n| C_n^{-2(1-a)/a} + 36 \| \rho \|_\infty^2 |I| K^{2R_n} \epsilon_n
\]
and the claim follows from the identities $|I_n| = |I|/n$ and $\epsilon_n = 6 \sqrt K \tau_n$.
\end{proof}

%%%%%%%%%%%%%%%%%%%%%%%%%%%%%%%%%%%5

\appendix

\section{Exponential localization of eigenvectors}
\label{ap:exp}

In this section we show that exponentially decaying bounds on fractional moments of the Green function imply  exponential localization of eigenvectors as stated in Proposition~\ref{pro:locas}.

\begin{proof}[Proof of Proposition \ref{pro:locas}]
This result follows from an estimate of the eigenfunction correlator in the spirit of \cite{Aiz94,AizSchFriHun01,AizElgNabSchSto06}. Let $P_{ \{ \cdot \} }$ denote the spectral projection corresponding to $H_n(\omega)$ and for $x, y \in \G_n$ define the eigenfunction correlator
\[
Q(x,y;I) = \sum_{E \in \sigma(H_n) \cap I} \left| \left< \delta_x, P_{\{E\}} \delta_y \right> \right| = \sum_{E \in \sigma(H_n) \cap I} |\phi_E(x)| \, |\phi_E(y)| \, ,
\]
where $\phi_E$ denote the eigenvector corresponding to $E \in \sigma(H_n(\omega))$. It was noticed in \cite{Aiz94} that the eigenfunction correlator can be bounded in terms of fractional moments of the Green function. Using the Schwarz inequality we estimate
\begin{align*}
\sum_{E \in \sigma(H_n) \cap I} |\phi_E(x)| \, |\phi_E(y)| \leq & \lk \sum_{E \in \sigma(H_n) \cap I} |\phi_E(x)|^{2-s} \, |\phi_E(y)|^s \rk^{1/2} \\
& \times \lk \sum_{E \in \sigma(H_n) \cap I} |\phi_E(x)|^s \, |\phi_E(y)|^{2-s} \rk^{1/2} \, .
\end{align*} 
Let $\sigma_x$ denote the spectral measure of $H_n(\omega)$ associated with vertex $x \in \G_n$. Then it follows that
\begin{align*}
Q(x,y;I) \leq & \lk \int_I  \left| \left< \delta_x, P_{\{E\}} \delta_x \right> \right|^{-s} \left| \left< \delta_x, P_{\{E\}} \delta_y \right> \right|^s \sigma_x(dE) \rk^{1/2} \\
& \times \lk \int_I  \left| \left< \delta_y, P_{\{E\}} \delta_y \right> \right|^{-s} \left| \left< \delta_y, P_{\{E\}} \delta_x \right> \right|^s \sigma_y(dE) \rk^{1/2} \, .
\end{align*}
Now one can apply the results of \cite[Section 3]{Aiz94} to estimate  
\[
\E\left[ Q(x,y;I) \right] \leq C \int_{I} \E \left[ \left| G_{n,\alpha}(x,y;E+i0) \right|^s \right] dE \, ,
\]
see also \cite[Thm. 6.8]{AizWar14} for a streamlined presentation. 
Hence, the fractional moment bound \eqref{eq:fracmom} implies
\begin{equation}
\label{eq:correlator}
\E\left[ Q(x,y;I) \right] \leq C |I| e^{-\mu_s d(x,y)} 
\end{equation}
with a uniform constant $C > 0$.

Let us now choose $\mu$ such that $\ln K < \mu < \mu_s - \ln K$ and estimate
\begin{equation}
\label{eq:corr2}
Q(x,y;I) = Q(x,y;I) e^{\mu d(x,y)} e^{-\mu d(x,y)} \leq A(x,\omega) e^{- \mu d(x,y)} \, ,
\end{equation}
where we have introduced the random variable
\[
A(x,\omega) =  \sum_{v \in \G_n} e^{\mu d(x,v)} Q(x,v;I) \, .
\]
By \eqref{eq:correlator} and \eqref{surfbound}, we have, for each fixed $x \in \G_n$,
\[
\E \left[ A(x) \right] \leq C |I| \sum_{v \in \G_n} e^{-(\mu_s-\mu) d(x,v)} \leq  \frac 32 \frac{C |I|}{1-K e^{-(\mu_s-\mu)}} \, .
\]
Now we set,  for  $1 < a < 2 - \ln K/\mu$,
\[
X_n(\omega) =\lk \frac 32 \frac{1}{1-K e^{-(2-a)\mu}} \rk^{1/(2a-2)} \lk  \sum_{x \in \G_n}  A(x,w) \rk^{a/(2a-2)} 
\]
and we obtain 
\[
\E \left[ X_n^{2(a-1)/a} \right] \leq \frac 32 \frac{C |I| n}{1-K e^{-(\mu_s-\mu)}}  \lk \frac 32 \frac{1}{1-K e^{-(2-a)\mu}} \rk^{1/a} \, .
\]
This yields \eqref{eq:exploc} with 
\[
\Cl = \frac 32 \frac{C }{1-K e^{-(\mu_s-\mu)}}  \lk \frac 32 \frac{1}{1-K e^{-(2-a)\mu}} \rk^{1/a} \, .
\]

To show that \eqref{eq:loccond} follows, we choose,  for each eigenvector $\phi_j$ with eigenvalue $E_j \in I$ the random vertex $x_j(\omega) = \argmax |\phi_j(x)|$. Then the definition of $Q(x,y;I)$ combined with estimate \eqref{eq:corr2} implies, for all $x \in \G_n$,
\begin{equation}
\label{eq:ef:apriori}
|\phi_j(x)| \leq \frac 1{|\phi_j(x_j)|} Q(x,x_j;I) \leq \frac {A(x_j,\omega)}{|\phi_j(x_j)|} e^{-\mu d(x,x_j)} \, .
\end{equation}
To estimate $|\phi_j(x_j)|$ from below we note that the normalization of $\phi_j$ implies, for any $1 < a < 2 - \ln K/\mu$, 
\[
1 = \sum_{x \in \G_n} |\phi_j(x)|^2 = \sum_{x \in \G_n} |\phi_j(x)|^a |\phi_j(x)|^{2-a}   \leq  |\phi_j(x_j)|^a \sum_{x \in \G_n}  |\phi_j(x)|^{2-a} \, .
\]
Inserting \eqref{eq:ef:apriori} and then using \eqref{surfbound} yields
\begin{align*}
1 &\leq |\phi_j(x_j)|^{2a-2} A(x_j,\omega)^{2-a} \sum_{x \in \G_n} e^{-(2-a)\mu d(x,x_j)} \\
& \leq |\phi_j(x_j)|^{2a-2} A(x_j,\omega)^{2-a} \frac 32 \frac{1}{1-Ke^{-(2-a)\mu}} \, .
\end{align*}
Solving for $|\phi_j(x_j)|$ we obtain
\[
|\phi_j(x_j)| \geq \lk \frac 23 \lk 1-K e^{-(2-a)\mu} \rk \rk^{1/(2a-2)} \frac 1{A(x_j,w)^{(2-a)/(2a-2)}} \, .
\]
Finally, we insert this estimate into \eqref{eq:ef:apriori} and we see that 
\[
|\phi_j(x)| \leq \lk \frac 32 \frac{1}{1-K e^{-(2-a)\mu}} \rk^{1/(2a-2)} A(x_j,w)^{a/(2a-2)} e^{-\mu d(x,x_j)} \leq X_n(\omega) e^{-\mu d(x,x_j)} \, .
\]
This completes the proof.
\end{proof}

%%%%%%%%%%%%%%%%%%%%%%%%%%%%%%%%%%%

\subsection*{Acknowledgments} The author wants to thank Michael Aizenman and Simone Warzel for helpful discussions. Financial support from DFG grant GE 2369/1-1 and NSF grant PHY-1122309 is gratefully acknowledged.

%%%%%%%%%%%%%%%%%%%%%%%%%%%%%%%%%%%%%%%5

%\bibliographystyle{amsplain}
%\bibliography{lgbib}

\end{document}